\documentclass[12pt]{article}
\usepackage{enumerate}
\usepackage{natbib}
\usepackage{longtable}
\usepackage{url} 

\newcommand{\blind}{1}

\addtolength{\oddsidemargin}{-.5in}%
\addtolength{\evensidemargin}{-1in}%
\addtolength{\textwidth}{1in}%
\addtolength{\textheight}{1.7in}%
\addtolength{\topmargin}{-1in}%

\usepackage{setspace}
\usepackage{etoolbox}
\BeforeBeginEnvironment{equation}{\begin{singlespace}\vspace*{-\baselineskip}\vspace*{-0.2cm}}
\AfterEndEnvironment{equation}{\end{singlespace}\vspace*{-0.2cm}\noindent\ignorespaces}

\BeforeBeginEnvironment{equation*}{\begin{singlespace}\vspace*{-\baselineskip} \vspace*{-0.2cm}}
\AfterEndEnvironment{equation*}{\end{singlespace}\vspace*{-0.2cm}\noindent\ignorespaces}

\BeforeBeginEnvironment{align}{\begin{singlespace}\vspace*{-\baselineskip} \vspace*{-0.2cm}}
\AfterEndEnvironment{align}{\end{singlespace}\vspace*{-0.2cm}\noindent\ignorespaces}

\BeforeBeginEnvironment{align*}{\begin{singlespace}\vspace*{-\baselineskip}\vspace*{-0.2cm}}
\AfterEndEnvironment{align*}{\end{singlespace}\vspace*{-0.2cm}\noindent\ignorespaces}

\AtBeginEnvironment{tabular}{\singlespacing}

\usepackage{titlesec}

\titlespacing\section{0pt}{0pt}{0pt plus 2pt minus 2pt}
\titlespacing\subsection{0pt}{0pt}{0pt plus 2pt minus 2pt}
\titlespacing\subsubsection{0pt}{0pt}{0pt plus 2pt minus 2pt}


\RequirePackage{amsthm,amsmath,amsfonts,amssymb,amsthm}
\RequirePackage[colorlinks,citecolor=blue,urlcolor=blue]{hyperref}
\RequirePackage{graphicx}
\usepackage{ifoddpage}
\usepackage{float}
\usepackage{changepage}


\PassOptionsToPackage{unicode}{hyperref}
\PassOptionsToPackage{naturalnames}{hyperref}

\usepackage{mathrsfs, latexsym}
\usepackage{bbm}
\usepackage[ruled,vlined]{algorithm2e}
\SetKwComment{commentSt}{ }{}
\usepackage{makecell}
\usepackage{mathrsfs}
\usepackage{courier}
\def\nn{\mathbb{N}}

\usepackage[dvipsnames]{xcolor}

\usepackage{appendix}


\def\rr{\mathbb{R}}

\def\nn{\mathbb{N}}

\def\argmin{\text{argmin}}
\def\argmax{\text{argmax}}
\def\1{\mathbbm{1}}

\allowdisplaybreaks
\usepackage{capt-of}

\renewcommand{\leq}{\leqslant} 
\renewcommand{\geq}{\geqslant}

\newcommand{\pluseq}{\mathrel{+}=}

\newcommand{\tr}{\mathbf{Tr}}

\newtheorem{theorem}{Theorem}[section]
\newtheorem{proposition}[theorem]{Proposition}

\newtheorem{corollary}[theorem]{Corollary}
\theoremstyle{remark}



\begin{document}

\def\spacingset#1{\renewcommand{\baselinestretch}%
{#1}\small\normalsize} \spacingset{1}


\if1\blind
{
  \title{\bf Bayes Watch: Bayesian Change-point Detection for Process Monitoring with Fault Detection}
  \author{Alexander C. Murph\thanks{
    The authors gratefully acknowledge the National Science Foundation under Grant No. DMS-1916115, 2113404, and 2210337, and the National Heart, Lung, and Blood Institute of the National Institutes of Health under Award Number R56HL155373.  } \\
    Statistical Sciences Group, \\
    Computer, Computational, and Statistical Sciences Division,\\ 
    Los Alamos National Laboratory \\
    and \\
    Curtis B. Storlie, Patrick M. Wilson \\
    The Kern Center for the Science of Health Care Delivery,\\ The Mayo Clinic, Rochester, MN \\
    and \\
    Jonathan P. Williams \\
    Department of Statistics, NC State University\\
    Jan Hannig\\
    Department of Statistics \& Operations Research,\\ University of NC at Chapel Hill
    }
  \maketitle
} \fi
\if0\blind
{
  \bigskip
  \bigskip
  \bigskip
  \begin{center}
    {\LARGE\bf Bayes Watch: Bayesian Change-point Detection for Process Monitoring with Fault Detection}
\end{center}
  \medskip
} \fi
\vspace*{-0.25cm}
\begin{abstract}
When a predictive model is in production, it must be monitored in real-time to ensure that its performance does not suffer due to drift or abrupt changes to data.  Ideally, this is done long before learning that the performance of the model itself has dropped by monitoring outcome data.  
In this paper we consider the problem of monitoring a predictive model that identifies the need for palliative care currently in production at the Mayo Clinic in Rochester, MN.  We introduce a framework, called \textit{Bayes Watch}, for detecting change-points in high-dimensional longitudinal data with mixed variable types and missing values and for determining in which variables the change-point occurred.  Bayes Watch fits an array of Gaussian Graphical Mixture Models to groupings of homogeneous data in time, called regimes, which are modeled as the observed states of a Markov process with unknown transition probabilities.  In doing so, Bayes Watch defines a posterior distribution on a vector of regime assignments, which gives meaningful expressions on the probability of every possible change-point.  Bayes Watch also allows for an effective and efficient fault detection system that assesses what features in the data where the most responsible for a given change-point.  
\end{abstract}
\noindent%
{\it Keywords:}  Reversible Jump MCMC, $G$-Wishart, change-point
\vfill

\newpage
\spacingset{1.75}

\section{Introduction}\label{sec:intro}

The tremendous amount of medical data available from hospitals combined with modern advancements in machine learning methods have a lot of promise in 
improving today's quality of care and extending patient survival.  One success of a data-driven approach to improving quality of care for patients is the work of \cite{murphree2021}, where the Mayo Clinic developed an in-process Gradient Boosting Machine (GBM) to predict the chance of individual patients needing a palliative care consultation within the next 7 day period.  Today, this model still sees active use at the Mayo Clinic in Rochester, MN, providing the rapid identification of patients who could benefit from palliative care.  

Data Scientists at the Mayo Clinic have experienced several instances of model degradation over time with the palliative care model, which they have attributed to a variety of changes in the underlying data stream.  To address this need for near real-time model monitoring, we create a general framework, called \textit{Bayes Watch}, for detecting change-points in real-world, messy steaming data and assessing what features in the data led to this change.  Finding changes in the input data is a natural means to drive model monitoring decisions; if one were to only rely on analyzing model outcomes, a significant change in the model may not be detected until well after the moment of model degradation, since collecting enough outcome data may take weeks or even months.  While the Bayes Watch approach is well-suited to the specific problem of monitoring prediction models in health care settings, the method is a general approach to a wide variety of change-point problems.  Unlike other change-point approaches, Bayes Watch leverages modern advances in learning graphical models, handles a wide array of data challenges, is able to simultaneously check for many different types of change-points, and comes with a unique fault-detection system that is shown to out-perform industry standards. 

To perform fault detection, Bayes Watch keeps a sparse log of Gaussian Graphical Mixture Models (GGMMs), which are finite mixture models where each component is a Gaussian Graphical Model (GGM) \citep{everitt1996, uhler2018}.  This log is kept throughout the MCMC sampling used to fit the change-point model.  After determining a change-point, these logs are used to analyze the differences between the data distributions before and after the change-point.  We introduce two novel metrics for this analysis, which are inspired by the variance decompositions used to calculate Sobol' indices \citep{sobol2001}.  These metrics are shown to perform well on simulated data, and on two separate time periods of the data fed into the \cite{murphree2021} palliative care model.

Beyond its flexibility against numerous data challenges, we show that the change-point model within Bayes Watch is able to catch a wide variety of change-point types.  In addition to changes in the center and spread of the data distribution over time, this system identifies nuanced changes in the way that data are missing, which is of special interest to data engineers trying to diagnosis data pipeline issues.  For data with values missing at random, Bayes Watch identifies significant changes in the amount of missing values.  For data with values missing according to some inter-related structure (data not missing at random) which is common in health care data, Bayes Watch is able to detect changes in this structure over time.  As far as the authors are aware, this is the first time in the literature that changes in the inter-related structure of missingness have been considered. 

The remainder of this paper is laid out as follows.  We begin with a comprehensive review of the existing change-point literature, and a longer discussion of the \cite{murphree2021} model's data and its challenges.  The technical details of the Bayes Watch model are outlined in Section \ref{sec:method}, including details on an MCMC algorithm developed for sampling regimes and fitting the model parameters.  This section also introduces a split-merge-swap algorithm as an efficient sampling scheme for the regimes.  The model is investigated using a comprehensive array of simulated data in Section \ref{sec:sim}.  This section provides evidence for the versatility of this model by simulating many different types of possible change-points under the influence of several data challenges.  In Section \ref{sec:posterior}, we develop a novel fault-detection technique for identifying what features led to a change-point after determining that one exists.  Our analysis of the Mayo Clinic palliative care model using Bayes Watch is provided in Section \ref{sec:app}.
A user-friendly version of the Bayes Watch framework is available as the R package `bayesWatch' on CRAN \citep{r2021}.  Questions and suggestions on this package can be directed to the corresponding author.

\subsection{Relevant Literature}
Statistical estimation of change-points dates back to the early 1900s, when researchers wanted to locate abrupt changes in the center of independent and identically distributed (iid) data for quality control in industry \citep{shewhart1931, page1954}.  Since these early developments, change-point detection has remained a regular topic for researchers, owing in part to the wide breadth of fields in which the problem arises \citep{truong2020}.  While there is a huge number of change-point papers and a wealth textbooks on the topic \citep{basseville1993, brodsky1994,chen2014}, the field still sees plenty of modern advancements.  This paper addresses the problem of detecting multiple change points in multivariate time series, so we limit the following literature review to approaches that specifically handle this problem.

Likely one of the most broadly used approaches to change-point detection is the Binary Segmentation (BS) algorithm.  In the BS algorithm, the data are searched for a single change-point that is then used to divide the data into two subsets.  This process is repeated recursively on each new set of subsets until a stopping criterion is achieved \citep{vostrikova1981}.  A classic BS algorithm has several drawbacks.  First, since the BS algorithm is essentially an optimization problem, it is unable to fully capture the uncertainty in the number and location of the change-points \citep{nam2012}.  Second, the BS algorithm is not consistent unless some amount of minimum spacing is assumed between change-points.  This is proven in \cite{fryzlewicz2014}, who give a potential solution, but only for the one-dimensional case.  Furthermore, we have not found many instances of the BS algorithm being used on high-dimensional data ($p > 100$), nor on data with missing values and mixed data types.  The paper by \cite{cho2015} introduces a variant of the BS algorithm that addresses the high-dimensional case, but only up to $p=100$.  The issue of change-point detection on data with missing values is addressed in \cite{londschien2021}, who introduce a framework that integrates the BS algorithm with data imputation schemes.  Once change-points are located, \cite{londschien2021} fit GGMs to regimes using the Graphical LASSO \citep{friedman2008}.  Like the previous BS approaches, \cite{londschien2021}'s method suffers from a lack of uncertainty quantification on the individual change-points, and is only tested on dimensions up to $p=100$. 



Besides the BS algorithm, there are other optimization schemes for determining the number and location of change-points in multivariate timeseries.  In \cite{lavielle2007}, finding multiple change-points is framed as a global optimization problem over the space of all possible change-point configurations (in contrast to the BS algorithm, which finds change-points locally).  The estimator on the location and number of change-points is calculated efficiently using a dynamic programming scheme.  In \cite{ombao2005}, multivariate timeseries are analysed using the SLEX (smooth localized complex exponentials) basis, which is a basis that is localized in time.  Choosing a proper SLEX model requires that the timeseries be segmented into approximately stationary segments, which \cite{ombao2005} do using a complexity-penalized log energy optimization problem.  While these approaches do handle the multivariate change-point problem, the performance of these approaches is only tested for small dimension sizes ($p<20$).  Furthermore, neither come with robust uncertainty quantification schemes nor directly handle the data challenges, such as mixed variable types and missing values, considered in this paper.

Change-point detection has seen significant treatment in the Bayesian literature, with applications in climate data \citep{ruggieri2013}, DNA segmentation \citep{fearnhead2007}, and finance data \citep{chopin2007}.  As far as the authors are aware, the earliest Bayesian approach for location change-points was in \cite{chernoff1964}, although the primary aim of this paper was not change-point detection but the estimation of the final mean of a sequence of iid Gaussian data without knowing when the means change.  Numerous other methods were formulated through the early 2000s.  In \cite{koop2007}, multiple change-points are located without knowing the number of change-points a priori by assuming that regime duration follows a Poisson distribution.  The method introduced by \cite{martinez2014} takes a nonparametric approach using a Poisson-Dirichlet process prior \citep{buntine2012}.  A popular formulation, which we consider in detail in this paper, is the approach initially described in \cite{chib1998}.  In \cite{chib1998}, change-points are modeled by learning a discrete random variable that indicates the regime from which a particular observation is drawn.  This method has seen numerous extensions and modern updates, such as in \cite{ko2015}, where a Dirichlet process prior is used to define a hidden Markov model with a potentially infinite number of states.  This removes the need to specify the number of states a priori.  Missing data is addressed in \cite{corradin2022}, who add a data imputation step in the sampling scheme introduced by \cite{martinez2014}. However, this method is only tested for dimension sizes up to $p=4$.

While there are many change-point methods and applications in the literature, there is not one readily available that handles every data challenge present in data like those considered in \cite{murphree2021} (see Section \ref{subsec:data}).  The contributions of this paper include the design of a general change-point model that directly addresses data with mixed types (continuous, binary, ordinal, and nominal data), missing values, and high dimensions ($p>100$), while also quantifying the uncertainty of each prediction using a Bayesian framework.  Bayes Watch achieves these aims by classifying data over time into different regimes of homogeneous data according to a Markov Process with unknown transition probabilities, where each regime is fit using a GGMM.  This leads to a posterior distribution on a vector of regime classifications, which is used to provide meaningful expressions on the probability of every possible change-point.  GGMMs are used because data in the wild, such as the data examined in this paper, are not well approximated by the normal distribution, so a normal mixture is needed.  However, the performance of Gaussian models and particularly Gaussian mixtures in high dimensions is known to suffer due to the $p^2$ growth of the number of covariance parameters.  The proposed remedy is a sparse precision approximation to the normal components to limit the number of parameters and preserve degrees of freedom for the estimation/identification of multiple regimes.  The graph structure that determines the sparsity of the precision matrices for the GGMMs is given a uniform prior distribution and learned along with the model parameters and the unknown transition probabilities.  The sampling of the graph structure is done using the Double Reversible Jump algorithm of \cite{lenkoski2013}.  Under a Bayesian hierarchical framework, hyperpriors are assigned to each prior parameter, which is discussed in later sections.

\subsection{Challenges for Process Monitoring of Hospital Data}\label{subsec:data}

While events that cause a change in model performance can be detected via monitoring outcomes, the outcomes can be somewhat rare and weeks or even months can pass before enough outcomes are observed to determine that model performance has degraded.  This issue is somewhat exacerbated in healthcare, since waiting until stark changes are eventually observed in model performance could mean a failure to address a patient's medical need and leads to care providers loosing trust in the model.  Thus, if there is an event that causes model degradation, this would need to be addressed as soon as possible and corrected, ideally before changes in model performance are observed.  Significant changes in the underlying input (or feature) data are strong proxies for performance failures in the model, since the data used to tune the model would no longer represent the current reality.  There are many ways that this underlying data might change, many of which have already been experienced by data scientists working on the palliative care model from \cite{murphree2021}.  In one example, the units of a laboratory test in the electronic medical record (EMR) changed without the knowledge of the modeling team.  In another example, a data pipeline issue led to a significant increase in missing data values for certain variables.    


We consider two segments of data fed into the Murphree model: one during February of 2020 and one from mid-May to mid-June of 2020.  These segments represent data from before and after the start of the broad shutdowns in the US due to the COVID-19 pandemic.  Within these datasets, there are 123 total variables that consist of 35 continuous, 87 binary, and 1 nominal, with 27 variables having some amount missing values.  Within each discrete variable in this application, there are no missing data values.  During the February segment, there were 233,342 data observations of a total of 13,542 patients.  During the mid-May to mid-June segment, there were 298,039 data observations of a total of 8,060 patients.  
As discussed in Section \ref{subsec:discrete}, additional dummy variables are added to the data to encode the locations of missing values prior to the data imputation step, and some variables were removed due to zero variance in these time periods.  With these changes, the total number of variables in the data is 182.

Each data object in this paper is a single observation on a patient in time that is fed into a predictive model call for a palliative care prediction.  These observations occur whenever a patient is moved to a different unit within the hospital or when a billable procedure/test occurs such as a laboratory test, documented diagnosis, etc.  Thus, there are therefore multiple observations for each patient.  For the purpose of this application, these multiple observations are assumed to be iid.  However, it is a topic of further work to examine the benefits of treating the unit of observation as the entire longitudinal patient encounter.  The data stream has a fixed set of variables that encode a broad set of information; they include patient labs, the presence/absence of specific chronic diseases, basic demographics, and information about a patient's movement within the hospital.  The data are grouped by the days that they occur.  A breakdown of the variables in this data, and their data types, is available in the Supplementary Material. 

The palliative care model prediction data pose many data challenges from a modeling perspective.  There is a wide variation in the proportion of missing values for the continuous variables; the average missing proportion within the continuous variables is 0.215, with a standard deviation of 0.282.  Furthermore, the continuous variables were mostly non-normal.  Although non-normality can be handled within the model monitoring framework proposed here, it does come with an added computational cost of having to learn many mixture components in the mixture model.  To limit the number of components needed in the mixture model and to ease the computational burden, the continuous variables are pre-processed using the Box-Cox transform, optimizing the transformation parameter according to the method by \cite{guerrero1993}.

Due to Health Insurance Portability and Accountability Act (HIPAA) considerations, the real data used in this paper is not available for release.  However, simulated data that mimics this real data is available within the `bayesWatch' package.

\section{Methodology}\label{sec:method}

\subsection{Missing, Discrete Variables, and Boundaries/Censoring}\label{subsec:discrete}
Let $\mathbf{Y} = (\mathbf{y}_1, \dots, \mathbf{y}_T )'$ be a timeseries of $I_t \times J$ matrices of data $\mathbf{y}_t$: $t \in \{1, \dots T\}$.  The matrix $\mathbf{y}_t$ refers to the data observed at the $t$th timepoint, which we assume consists of $I_t$ independent observations of a $J$-dimensional random vector.  Let $y_{t;ij}$ denote the $j$th variate of the $i$th observation at the $t$th timepoint ($i \in \{ 1, \dots, I_t\}$ and $j \in \{1, \dots, J\}$).  This setup is meant to mimic data streams of a fixed set of variables grouped by the days that they occur.  

To handle cases where the data $\mathbf{Y}$ are binary, ordinal, nominal, censored, or missing, a Gaussian latent variable approach is used, which encodes the observed data in terms of unknown random values.  In the following, we outline the technical details of this approach. 

Suppose the sets $\mathcal{D}, \mathcal{D}^c$ form a partition of the set $\{1, \dots, J\}$ where the variables $y_{t;ij}$ are discrete whenever $j \in \mathcal{D}$ and continuous whenever $j \in \mathcal{D}^c$, for every observed timepoint $t.$  For every $y_{t;ij}$ such that $j \in \mathcal{D}^c$, let $\infty \leq b_j \leq \mathbf{y}_{t; i, j} \leq c_j \leq \infty$ be the upper and lower limits for all observations, for all $t$ and $i$.  For variables without left or right boundaries on their values, set $b_j = -\infty$ and $c_j = \infty$.  For the time being, assume that all discrete values are ordinal, where variable $j\in \mathcal{D}$ has $L_j$ levels $d_{j,1}, \dots, d_{j,L_j}$.  Define the latent variables $\mathbf{Z} = ( \mathbf{z}_1, \dots, \mathbf{z}_T )'$ as a series of $I_t \times J$ matrices, parallel to $\mathbf{Y}$, such that all observations are continuous, $J$-variate random vectors where
\begin{equation}\label{eq:latent_setup}
    y_{t;ij} = \begin{cases}
\sum_{l=1}^L d_{j,l}\1(a_{j,l-1} < z_{t;ij} \leq a_{j,l}) & \text{for }j \in \mathcal{D} \\
z_{t;ij}\1( b_j \leq z_{t;ij} \leq c_j) + b_j \1(z_{t;ij} < b_j) + c_j \1(z_{t;ij} \geq c_j)  & \text{for }j \in \mathcal{D}^c
\end{cases},
\end{equation} 
similar to the setup found in \cite{storlie2018}.  The cutoffs $a_{j,l} := d_{j,l}$ for $l = 1, \dots, L_j - 1$ and $a_{j,0} =-\infty, a_{j,L_j} =\infty.$  This approach accounts for binary variables ($L_j = 2$), where $a_{j,1}$ and $d_{j,1}$ are each set to 1.  Nominal variables can be modeled according to a latent variable approach similar to the set-up in \eqref{eq:latent_setup}.  The underlying latent structure for a nominal variable $y_{t;ij}$ with $L_j$ levels is a vector of values $\mathbf{z}_{t;ij} = (z_{t;ij1}, \dots, z_{t;ijL_j})$, where $y_{t;ij} = d_{j,l}$ if $l = \argmax_l z_{t;ijl}$ \citep{bhattacharya2012b}.  Note that this approach will cause the elements of $\mathbf{Z}$ to no longer be $J$-dimensional.  However, since the data $\mathbf{Y}$ are only used to update the latent data $\mathbf{Z}$, this does not come with any major computational difficulties.

Missing values are handled by imputing samples from the multivariate normal distribution used to model the latent data, conditioned on the values that are not missing.  This process is outlined in the Supplementary Material, and is similar to the approach outlined by \cite{corradin2022}.  While imputing is very useful for fitting the overall model, it removes information on the way that the data missing.  According to the experiences of researchers at the Mayo Clinic, this information is very useful, since it may indicate a data pipeline issue that should be caught by the model.  To keep this information, indicator variables are added to the data that encode instances of missing data.  That is, for a variable $\mathsf{foo}$ that has some number of missing values, create the additional binary variable $\mathsf{fooNA}$ in the data where
\begin{equation}\label{eq:missing_indicators}
    \mathsf{fooNA}_i = \begin{cases}
1 & \text{if $\mathsf{foo}_i$ is missing} \\
0 & \text{if $\mathsf{foo}_i$ is not missing}
\end{cases}.
\end{equation} 

\subsection{Markov Process Change-point Model}\label{subsec:HMM}

To determine change-points, we will use the discrete change-point model setup from \cite{chib1998}, which is similar to that of \cite{ko2015}.  In the following, we provide some technical details on this model.

Let $R, T, k \in \nn$.  Each observed timepoint, which can be written in terms of a matrix of latent variables $\mathbf{z}_t$ according to \eqref{eq:latent_setup}, is assumed to come from a distribution depending on parameter sets from the collection $\boldsymbol\Theta := \{ \boldsymbol\theta^{(1)}, \dots, \boldsymbol\theta^{(R)}\}$ whose value changes at unknown change-points $\boldsymbol\tau := \{ \tau_1, \dots, \tau_k \}$ such that $1 < \tau_1 < \dots < \tau_k < T$.  The change-points $\boldsymbol\tau$ segment all discrete timepoints $\{ 1, \dots, T\}$ into groups called \textit{regimes}.  Assume that $\boldsymbol\theta^{(r)}$ is constant within a regime $r$, but differs across regimes.  Notationally,
\begin{equation}\label{eq:regimeDistns}
  \mathbf{z}_t \sim \begin{cases} 
      p(\mathbf{z}_t | \boldsymbol\theta^{(1)}) & \text{if } t \leq \tau_1, \\
      p(\mathbf{z}_t | \boldsymbol\theta^{(2)}) & \text{if } \tau_1 < t \leq \tau_2, \\
      \vdots & \vdots \\
      p(\mathbf{z}_t | \boldsymbol\theta^{(k+1)}) & \text{if } \tau_k < t \leq T.
   \end{cases} 
\end{equation} 
The number of possible regimes (distinguished by the superscript $r$) is distinct from the number of change-points (distinguished by the subscript $k$).  In general, superscripts are reserved in this paper to denote regimes.  When $k < R$, the parameter sets of regimes without data are learned according to their priors (discussed in Section \ref{subsec:dists}).  Meanwhile, $R$ should be chosen to be sufficiently large to make the case of $k > R$ impractical.

Let $\boldsymbol\phi := (\phi_1, \dots, \phi_T )$ be a discrete state vector such that the value of $\phi_t$ denotes the regime to which $\mathbf{z}_t$ belongs:
\[ \mathbf{z}_t | \phi_t \sim p(\mathbf{z}_t | \boldsymbol\theta^{(\phi_t)} ). \]
Since $\boldsymbol\phi$ represents regime membership of observations over time, the $\phi_t$s must be sequentially ordered, with $\phi_{t+1} \in \{ \phi_t, \phi_t + 1 \}$.  Note that this prevents the model from returning to a previously learned regime, which is done for computational simplicity in the MCMC sampling scheme.  Allowing for a regime change that returns to a previously observed regime would be an interesting aim for future work.  Learning the state vector $\boldsymbol\phi$ is the primary inferential concern, since $\boldsymbol\phi$ determines the change-points in the model completely.

The state vector $\boldsymbol\phi$ is modeled as the observed states of some Markov process constrained to mimic the forward movement of time.  Fix the number of possible regimes $R \in \nn$.  The transition matrix of such a process has the form
\begin{equation*}
P = \begin{pmatrix}
\mathsf{p}^{(1)} & 1 - \mathsf{p}^{(1)} & 0 & \dots & 0 & 0 \\
0 & \mathsf{p}^{(2)} & 1-\mathsf{p}^{(2)} & \dots & 0 & 0 \\
\vdots & \vdots & \vdots & \ddots & \vdots & \vdots \\
0 & 0 & 0 & \dots & \mathsf{p}^{(R - 1)} & 1 - \mathsf{p}^{(R - 1)} \\
0 & 0 & 0  & \dots & 0 & 1 \\
\end{pmatrix},
\end{equation*}
where $1 - \mathsf{p}^{(r)}$ is the probability of moving from regime $r$ to regime $r+1$: 
\begin{equation*} 
p(\phi_t = r+1 | \phi_{t-1} = r) = 1 - \mathsf{p}^{(r)} \text{ for } r \in \{ 1, \dots, R - 1\}.
\end{equation*}
For this setup, it is not possible to return to a previous regime.  In practice, the value $R$ is set to be some large integer well above the number of regimes one expects to observe.  For applications with a fixed number of possible locations for a change $T$ (in our case this would be the number of distinct days), one can exhaust all possibilities by setting $R = T$.

\subsection{Dirichlet Process Mixture Model}\label{subsec:dirichlet}
In many applications, it may be unreasonable to assume that the data within a given regime follows a multivariate normal distribution.  While data transformations can help mitigate non-normality, these approaches are largely ineffective on multi-modal data.  Using the Dirichlet process (DP) prior, a mixture distribution is able to take a potentially infinite number of components, which admits a flexible distribution that can fit even starkly non-normal, multi-modal data.  The model parameters within a regime, $\boldsymbol\theta^{(r)}$ for $r \in \{1,\dots,R\}$, therefore follow the model
\begin{equation}\label{eq:dirichlet}
    \mathbf{z}_{t; i,[J]} | \boldsymbol\theta^{(r)} \sim F(\boldsymbol\theta^{(r)}),~~~~~\boldsymbol\theta^{(r)} \sim H,~~~~~H \sim DP(H_0, \alpha),
\end{equation} 
where $H \sim DP(H_0,\alpha)$ means that $H$ is a random distribution generated by a DP with base measure $H_0$, $\alpha$ is the precision parameter \citep{maceachern1998}, and $[n]:=\{1,\dots,n\}$ for any integer $n$.  In this general set-up, the likelihood function $F$ could be taken to be any distribution parameterized by $\boldsymbol\theta^{(r)}$; this is taken to be a Gaussian for our applications. 

Let $\boldsymbol\gamma_t := \{\gamma_{t;1},\dots \gamma_{t;I_t}\}$ be the component assignments for timepoint $t$ generated by the Dirichlet process.  The posterior distribution for $\boldsymbol\gamma_t$ is often calculated by integrating over $H$.  However, as a consequence of the design of the Gibbs sampler proposed in Section \ref{sec:mcmc}, its distribution is instead approximated using a truncated stick-breaking process.  Let $Q$ be some large integer that we use to truncate the DP, and let $\Pi := \{\pi_1, \dots, \pi_Q\}$ be the probability of an observation belonging to components $\{1,\dots, Q\}$.  The posterior probability that $\gamma_{t;i}$ belongs to component $k \in \{1,\dots,Q\}$ is
\[ p(\gamma_{t;i} = k) \propto \pi_k f(z_{t;i} | \theta_{t;k}),~~~~~ v_k | \boldsymbol\gamma_t \backslash \gamma_{t;i} \sim \text{Beta}\left( 1 + \sum_{j=1}^{I_t} \1(\gamma_{t,j} = k), \alpha + \sum_{j=1}^{I_t} \1(\gamma_{t;j} > k) \right), \]
where the sticks $v_k$ map to the component probabilities,
\[ \pi_k = v_k \prod_{j=1}^{k-1} (1 - v_j), \]
and $\1(\cdot)$ denotes the indicator function.

In the simulation study, the authors found good model fits with a relatively low number of components (between 1 and 3, usually), so $\alpha$ was set sufficiently small and a $Q$ of 7 was chosen to avoid the DP from ever being actively truncated.  While more components would mean a closer fit to the observed data, this comes with an increased computational burden.  The balance between a sufficiently close fit of the model and the computations required to achieve this fit is controlled by $\alpha$, and should be approached on a case-by-case basis.

\subsection{Model and Prior Distributions}\label{subsec:dists}
The Bayes Watch framework requires the simultaneous sampling of many nuisance parameters alongside the sampling of the parameters of primary interest: $\boldsymbol\Theta$ and $\boldsymbol\phi$.  We perform this fitting using a Bayesian hierarchical framework, which allows for efficient learning by way of Gibbs sampling with multiple Metropolis-Hastings updating steps.  In the following, we outline the distributional assumptions on the data, provide a comprehensive list of parameters that need to be sampled for this model, and state the prior distributions for each.

The elements of the matrix $P$ are modeled using Beta$(w,v)$ priors on each $\mathsf{p}^{(i)}$.  This is notably different to the common approach of using a Dirichlet Process prior with an infinite number of states, which allows one to integrate out the $\mathsf{p}^{(i)}$'s, removing the need for learning them \citep{beal2002}.  However, the Dirichlet Process approach encodes into the framework a prior belief that the regimes are ``sticky," i.e. that the larger a regime is, the less likely it is for a regime to change.  This is notably not the case for the Mayo Clinic palliative care model data, and in general not an expectation that the authors have for this model.  Thus, it is necessary to learn these parameters.  Embracing a Bayesian hierarchical framework, Gamma$(a,b)$ priors are assigned to the hyperparameters $\alpha$ and $\beta$, which are each learned separately in the final MCMC chain. 

Since the elements of the matrix $P$ are learned explicitly, the prior distribution of the state vector $\boldsymbol\phi$ is known.  The vector $\boldsymbol\phi$ is updated according to a merge-split-swap approach similar to the one proposed in \cite{martinez2014}, but with a more data-driven approach to sampling a split point (See Section \ref{sec:mcmc}).  

The distribution $F$ from \eqref{eq:dirichlet} is taken to be a GGMM with parameters $\boldsymbol\theta^{(r)}:=\{ \boldsymbol\mu^{(r)}, \boldsymbol\Lambda^{(r)}, G\}$, where $\boldsymbol\mu^{(r)}, \boldsymbol\Lambda^{(r)}$ are collections of mean vectors and precision matrices for all possible $Q$ components in regime $r$.  This model is chosen to avoid imposing a normality assumption on real-world data, while also providing a sparsity assumption to help with an exploding $p^2$ growth in the number of parameters that need to be learned.  Let $G$ be some undirected graph $G := (V, E)$ for $V := \{1, \dots, J\}$ and $(q,s) \in E$ if and only if the edge between vertices $q,s \in V$ is represented in $G$, and let $\Lambda^{(r)}_{\gamma_{t;i}}, \mu^{(r)}_{\gamma_{t;i}}$ be the precision matrix and mean vector, respectively, of the $\gamma_{t;i}^{\text{th}}$ component of regime $r$.  Following typical conventions for GGMs, the zeros of a precision matrix $\Lambda^{(r)}_{\gamma_{t;i}}$ align with the missing edges of the graph structure $G$: $(\Lambda^{(r)}_{\gamma_{t,i}})_{q,s} = 0$ if and only if $(q,s) \notin E$, where $(\mathsf{c})_{\mathsf{a},\mathsf{b}}$ means the submatrix with only the row(s) $\mathsf{a}$ and the column(s) $\mathsf{b}$ of the matrix $\mathsf{c}$.  Thus, given the component assignment vector $\boldsymbol\gamma$, the likelihood for the $i^{\text{th}}$ observation in regime $r$ is $\text{MVN}(\mu^{(r)}_{\gamma_{t;i} }, \Lambda^{(r)}_{\gamma_{t;i} }, G)$ whenever $\tau_{r-1} < t \leq \tau_r$, where $\text{MVN}$ refers to the mixture multivariate normal distribution with fixed graph structure. While $G$ is learned along with the parameters $\mu^{(r)}_{\gamma_{t;i}}$ and $\Lambda^{(r)}_{\gamma_{t;i}}$, we make the practical assumption that $G$ is the same across components and across regimes.

The base distribution $H_0$ in \eqref{eq:dirichlet} is taken to be a Normal $G$-Wishart (NG-W), which we introduce as the generalization of the Normal Wishart prior given the constraint imposed by a known graph structure $G$.  This is also therefore a generalization of the normal-normal Mixture Dirichlet Process for when a graph structure $G$ is known \citep{maceachern1998}.  As we will show, the NG-W is the conjugate prior of a multivariate normal distribution with unknown parameters $(\mu^{(r)}_{\gamma_{t;i}}, \Lambda^{(r)}_{\gamma_{t;i}})$, but known graph $G$.  Define the NG-W analogously to how the Normal Wishart is defined as in \cite{bishop2006}: as a joint distribution of $(\mu^{(r)}_{\gamma_{t;i}}, \Lambda^{(r)}_{\gamma_{t;i}})$, determined by the conditionals
\begin{equation*} 
\mu_{\gamma_{t;i}}^{(r)}| m, \lambda, \Lambda_{\gamma_{t;i}}^{(r)} \sim \mathcal{N}(m, (\lambda \Lambda_{\gamma_{t;i}}^{(r)})^{-1}), ~~~~~\Lambda_{\gamma_{t;i}}^{(r)} | D, \nu \sim \mathcal{W}_G (\Lambda | D, \nu).
\end{equation*}
Here, $m$ and $(\lambda \Lambda_{\gamma_{t;i}}^{(r)})^{-1}$ are the mean vector and covariance matrix of a multivariate normal distribution, while $D$ and $\nu$ are the scale matrix and degrees of freedom of a $G$-Wishart distribution, using the formulation defined by \cite{wang2012}:
\begin{equation*}  
f_{\mathcal{W}_G(\nu, D)}(\Lambda) = I_G(\nu, D)^{-1} |\Lambda|^{\frac{\nu-2}{2}} \exp \left\{ - \frac{1}{2} \tr\left( D' \Lambda \right) \right\}\1_{\Lambda \in \mathcal{P}_G}, 
\end{equation*}
where $\mathcal{P}_G$ is the space of positive-definite, symmetric precision matrices with a zero structure that matches the graph $G$, and $f_X$ denotes the PDF of the random variable $X$.  The term $I_G(\nu, D)$, called the $G$-Wishart normalizing constant, is defined as:
\begin{equation*} 
I_G(\nu, D) = \int |\Lambda|^{\frac{\nu-2}{2}} \exp \left\{ - \frac{1}{2} \tr\left( D' \Lambda \right) \right\}\1_{\Lambda \in \mathcal{P}_G}d\Lambda. 
\end{equation*}
Thus, $(\mu_{\gamma_{t;i}}^{(r)}, \Lambda_{\gamma_{t;i}}^{(r)}) \sim \text{N-GW}(m, \lambda, D, \nu),$ with PDF
\begin{equation} \label{eq:NGW}
    f_{(\mu, \Lambda)} = f_{\mathcal{N}(m, (\lambda \Lambda)^{-1})} f_{\mathcal{W}_G(D, \nu)}.
\end{equation}

The NG-W is a well-defined probability distribution, and inherits the conjugacy of the Normal Wishart due to the following proposition:
\begin{proposition}
Let $\mathcal{P}$ be some family of probability distribution functions and $y \sim f(y | \theta)$ be some observed data where $\theta \sim f(\theta) \in \mathcal{P}$ implies $f(\theta | y) \in \mathcal{P}$.  Assume that there exists some set $A$ of the parameter space such that $\int_A f(\theta | y) d\theta > 0$ and $\int_A f(\theta) d\theta > 0$.  Let $\mathcal{P}_A$ be another family of probability distributions, defined such that $f(\theta) \in \mathcal{P}$ implies $f^\star(\theta):=  c\cdot f(\theta) \1_A \in \mathcal{P}_A$, where $c$ is the normalizing constant so that $f^\star(\theta)$ is a density.  Then $\mathcal{P}_A$ is a conjugate family.
\end{proposition}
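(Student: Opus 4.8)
The plan is to verify the defining property of a conjugate family for $\mathcal{P}_A$ directly, by running Bayes' rule on an arbitrary member of $\mathcal{P}_A$ and tracking how the indicator $\1_A$ and the various normalizing constants propagate through the computation. First I would fix $f^\star \in \mathcal{P}_A$, so that $f^\star(\theta) = c\, f(\theta)\,\1_A(\theta)$ for some $f \in \mathcal{P}$, where $c = \big(\int_A f(\theta)\,d\theta\big)^{-1}$ is finite and strictly positive precisely because $\int_A f(\theta)\,d\theta > 0$. Forming the posterior under the prior $f^\star$ and the likelihood $f(y \mid \theta)$,
\[
f^\star(\theta \mid y) = \frac{f(y\mid\theta)\,f^\star(\theta)}{\int f(y\mid\theta')\,f^\star(\theta')\,d\theta'} = \frac{c\, f(y\mid\theta)\, f(\theta)\, \1_A(\theta)}{c \int_A f(y\mid\theta')\, f(\theta')\, d\theta'}.
\]
Writing $m(y) := \int f(y\mid\theta')\, f(\theta')\, d\theta'$ for the marginal under $f$ --- strictly positive and finite because the posterior $f(\cdot\mid y)$ is assumed to exist --- the identity $f(y\mid\theta)f(\theta) = m(y)\,f(\theta\mid y)$ turns the numerator into $c\,m(y)\,f(\theta\mid y)\,\1_A(\theta)$ and the denominator into $c\,m(y)\int_A f(\theta'\mid y)\,d\theta'$; the constants $c$ and $m(y)$ cancel, leaving $f^\star(\theta\mid y) = f(\theta\mid y)\,\1_A(\theta)\big/\int_A f(\theta'\mid y)\,d\theta'$.

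The second step is to recognize the right-hand side as an element of $\mathcal{P}_A$. Since $\mathcal{P}$ is conjugate for this likelihood, $g := f(\cdot\mid y) \in \mathcal{P}$; and restricting $g$ to $A$ and renormalizing is legitimate because $0 < \int_A g(\theta)\,d\theta \le 1 < \infty$, where the strict lower bound is exactly the hypothesis $\int_A f(\theta\mid y)\,d\theta > 0$ and the upper bound is automatic since $g$ is a density. Hence $f^\star(\cdot\mid y) = g^\star$ with $g \in \mathcal{P}$, i.e. $f^\star(\cdot\mid y) \in \mathcal{P}_A$, which is exactly the closure property that makes $\mathcal{P}_A$ a conjugate family.

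There is no deep obstacle here; the argument is bookkeeping with Bayes' rule, and the only care needed is to confirm that every normalizing constant in sight is finite and strictly positive, and that the two positivity hypotheses are invoked in exactly the right places --- $\int_A f > 0$ to make $f^\star$ a bona fide density in the first place, and $\int_A f(\cdot\mid y) > 0$ to make the updated density a bona fide density. I would close by noting that in the application driving the proposition --- $\mathcal{P}$ the Normal--Wishart family, $A$ the set of $(\mu,\Lambda)$ with $\Lambda \in \mathcal{P}_G$, and $\mathcal{P}_A$ the NG-W family of \eqref{eq:NGW} --- both positivity conditions hold automatically, since the cone $\mathcal{P}_G$ carries positive mass under any posterior Wishart, so \eqref{eq:NGW} indeed defines a conjugate prior for the multivariate normal with known graph $G$.
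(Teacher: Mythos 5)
Your proof is correct and follows essentially the same route as the paper's: apply Bayes' rule to the restricted prior, observe that the posterior is proportional to $f(\theta\mid y)\,\1_A$, and identify this as the renormalized restriction of a member of $\mathcal{P}$, hence a member of $\mathcal{P}_A$. You merely make explicit the cancellation of constants and the role of the two positivity hypotheses, which the paper's one-line computation leaves implicit.
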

\begin{proof}
$f^\star(\theta |y) = \frac{f^\star(\theta) f(y | \theta)}{\int f^\star(\theta) f(y | \theta) d\theta} = \frac{c \cdot f(\theta) f(y | \theta)\1_A}{\int f(\theta) f(y | \theta) d\theta} \cdot \frac{\int f(\theta) f(y | \theta)d\theta }{ \int f^\star(\theta) f(y | \theta)d\theta} \propto f(\theta|y) \1_A \in \mathcal{P}_A.$
\end{proof}

The following Corollary is an immediate consequence of the above result and the fact that the Normal-Wishart Distribution is conjugate.  
\begin{corollary}\label{cor:conjugate}
The NG-W distribution, as defined in Equation \ref{eq:NGW}, is a conjugate prior for MVN data, assuming that $\int_{\rr^J } \int_{\mathcal{P}_G } f_{\mathcal{W}_G(D,\nu)}(\theta) f_{\mathcal{N}(\mu, \Lambda)} (y) d\mu ~ d\Lambda > 0$.
\end{corollary}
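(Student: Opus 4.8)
The plan is to invoke the Proposition with $\mathcal{P}$ taken to be the Normal--Wishart family on the parameter space $\rr^J\times\{\Lambda\succ0\}$ and with the restricting set $A:=\rr^J\times\mathcal{P}_G$, so that the NG-W family plays the role of the restricted family $\mathcal{P}_A$. First I would recall the classical fact (\cite{bishop2006}) that the Normal--Wishart family is conjugate for multivariate normal data with unknown mean vector and precision matrix: if $(\mu,\Lambda)\sim\text{N-W}(m,\lambda,D,\nu)$ and $y\mid(\mu,\Lambda)\sim\mathcal{N}(\mu,\Lambda^{-1})$, then the posterior of $(\mu,\Lambda)$ is again Normal--Wishart with hyperparameters updated by the usual closed-form rules. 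This supplies the hypothesis ``$f(\theta)\in\mathcal{P}\Rightarrow f(\theta\mid y)\in\mathcal{P}$'' that the Proposition requires.

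Next I would match the NG-W density of \eqref{eq:NGW} with the object $f^\star(\theta)=c\cdot f(\theta)\1_A$ appearing in the Proposition. The normal conditional $\mathcal{N}(m,(\lambda\Lambda)^{-1})$ is unaffected by intersecting the parameter space with $A$, so this reduces to observing that the $G$-Wishart density of \cite{wang2012} is, by construction, the Wishart density restricted to $\mathcal{P}_G$ and rescaled: $f_{\mathcal{W}_G(\nu,D)}(\Lambda)=I_G(\nu,D)^{-1}|\Lambda|^{(\nu-2)/2}\exp\{-\tfrac12\tr(D'\Lambda)\}\1_{\Lambda\in\mathcal{P}_G}\propto f_{\mathcal{W}(\nu,D)}(\Lambda)\1_{\Lambda\in\mathcal{P}_G}$. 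Hence the product form $f_{(\mu,\Lambda)}=f_{\mathcal{N}(m,(\lambda\Lambda)^{-1})}f_{\mathcal{W}_G(D,\nu)}$ is exactly a renormalized restriction of the Normal--Wishart density to $A$, so the NG-W family is precisely the family $\mathcal{P}_A$ that the Proposition builds from $\mathcal{P}$ and $A$.

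It then remains to check the two positivity hypotheses of the Proposition. For the prior, $\mathcal{P}_G$ is a nonempty open convex cone inside the linear space of symmetric matrices with the zero pattern of $G$, so in the coordinates given by the free entries of $\Lambda$ it has positive Lebesgue measure, and the Wishart density formula is strictly positive on it; hence $\int_A f(\theta)\,d\theta>0$ (and finite, since a $G$-Wishart with admissible $\nu,D$ is proper). For the posterior, the same reasoning gives positivity of $\int_A f(\theta\mid y)\,d\theta$ provided the restricted posterior is a proper distribution, and this is exactly the role of the integrability hypothesis in the statement of the corollary: it guarantees that $I_G$ evaluated at the data-updated parameters is finite, so that $f(\theta\mid y)\1_A$ renormalizes to an NG-W density. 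With both hypotheses in hand, the Proposition yields that $\mathcal{P}_A=\text{NG-W}$ is a conjugate family for MVN data, and tracking the restriction through the posterior computation shows the NG-W hyperparameter updates coincide with the familiar Normal--Wishart ones.

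The step I expect to be the main obstacle is not any calculation but the measure-theoretic bookkeeping around $A$: because $\mathcal{P}_G$ is a lower-dimensional slice of the cone of positive-definite matrices, ``restriction to $A$'' must be read as restriction of the density \emph{formula} against Lebesgue measure on the free coordinates of $\Lambda$ (equivalently, the reference measure implicitly used to define the $G$-Wishart), and one has to confirm that under this convention the proof of the Proposition goes through verbatim and that the $G$-Wishart genuinely equals the renormalized restricted Wishart. Once this convention is pinned down, the remaining arguments are routine.
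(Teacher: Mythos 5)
Your proposal is correct and takes exactly the route the paper intends: the paper offers no written proof beyond declaring the corollary an immediate consequence of the Proposition together with Normal--Wishart conjugacy, and your argument simply instantiates the Proposition with $\mathcal{P}$ the Normal--Wishart family and $A=\rr^J\times\mathcal{P}_G$. Your closing remark about reading the restriction against Lebesgue measure on the free entries of $\Lambda$ (so that $\int_A f(\theta)\,d\theta>0$ is not vacuously violated when $G$ is not complete) is a genuine point of care that the paper leaves implicit.
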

The hyperparameter $m$ is learned using a Gamma conjugate prior; the hyperparameter $\lambda$ is learned using a Normal conjugate prior.  The $D$ scale matrix is learned using a Wishart prior.

The prior for $G$ is determined by the probability that an edge is present, assuming that edge representation is independent over all possible edges.  Let $\rho$ be the probability that a given edge is present.  Then,
\begin{equation*} 
P(G) = \rho^{|G|}(1- \rho)^{ J(J-1)/2 - |G| }, 
\end{equation*}
where $|G|$ is the number of unique elements in $G$.
Alternative choices for the prior distribution on the graph space $\mathcal{G}$ are plentiful, such as priors that encourage sparse graph structures \citep{jones2005} and those based on multivariate discrete distributions \citep{scutari2013}. 

\subsection{Point Estimate of \texorpdfstring{$\boldsymbol\phi$}{}}\label{subsec:pointestimate} The MCMC scheme described here simulates an empirical posterior distribution for the regime vector $\boldsymbol\phi$.  This distribution is a rich expression of the space of feasible regime vectors used to attribute probabilities to the space of possible change-points.  Given samples on $\boldsymbol\phi$, one can calculate an estimate for the change-point probability vector,
\[ \mathfrak{C} = (\mathfrak{c}_1, \dots, \mathfrak{c}_{T-1}), \]
where $\mathfrak{c}_t$ is the probability that a change-point occurs after day $t$.  Each value $\mathfrak{c}_t$ is estimated by the proportion of $\boldsymbol\phi$ samples that see a regime change directly after day $t$.  Using $\mathfrak{C}$, it is straightforward to derive a point estimate of the true vector of change-points; given some cutoff value $\mathfrak{c}^\star$, determine a change-point after every day $t$ for which $\mathfrak{c}_t \geq \mathfrak{c}^\star$.

The choice for the cutoff value $\mathfrak{c}^\star$ is ideally chosen based on user experience in the context of a given data application, with the practical aim of controlling for the false positive rate and thus limiting alarm fatigue.  In use cases where there is no context for a reasonable choice of $\mathfrak{c}^\star$, the authors suggest the following procedure.  Simulate draws for $\boldsymbol\phi$ and $\boldsymbol\Theta$ and let $\hat{\boldsymbol\phi}$ be the maximum \textit{a posteriori} (MAP) estimate.  Treating $\hat{\boldsymbol\phi}$ as the true regime vector, simulate multiple datasets using $\boldsymbol\Theta$ and run the MCMC algorithm on each of these data sets.  Using these secondary posterior draws, set $\mathfrak{c}^\star$ to a value that gives a desired false positive rate against the initial estimate of true regime vectors $\hat{\boldsymbol\phi}$.

One drawback of the above approach is the added computational burden, which becomes particularly cumbersome for applications to data with more than 200 variables.  Through many experiments, we found a cutoff of 0.5 to be a reasonable default choice in our investigations.  This being said, a researcher should still look at the entire spread of change-point probabilities when analyzing the model output.

\subsection{MCMC Algorithm}\label{sec:mcmc}
Under the Gibbs framework, new values are simulated for each parameter by first fixing the value of every other parameter.  The specific distributional details for every prior and sampling distribution used to fit the change-point model are technical in nature and are available in the Supplementary Material.  The MCMC chain samples the parameters in the following order: draw latent data $\mathbf{Z}$; draw hyperparameters $D, \boldsymbol\gamma, P, w, v, m, \lambda, \nu$; draw regime vector $\boldsymbol\phi$ (merge-split); draw regime vector $\boldsymbol\phi$ (swap); draw parameter collections $\Theta$; draw graph structure $G$.  Initial values are chosen from the prior distributions or from empirical estimates using the data, when these are available.  The Bayes Watch framework uses the Double Reversible Jump (DRJ) algorithm of \cite{lenkoski2013} to efficiently resample the graph structure.  The DRJ combines an exact sampler of the $G$-Wishart distribution, the RJ procedure in \cite{dobra2011}, and the exchange algorithm of \cite{wang2012}, to create a MCMC procedure that sidesteps the need to calculate the $G$-Wishart normalizing constant, since this computation can be a computational bottleneck.  While the DRJ algorithm is a useful way to avoid calculating the normalizing constant of the $G$-Wishart distribution, there remains a need to occasionally calculate this constant within Bayes Watch.  This need arises when resampling the regime vector $\boldsymbol\phi$, since changing the underlying model causes a mismatch in the Metropolis-Hastings ratio regardless of our choice of a conjugate prior.  Details on this mismatch and further details on the DRJ algorithm are available in the Supplementary Material.

\section{Model Validation}\label{sec:sim}

We use a simulation study to explore the robustness of the Bayes Watch change-point model against many different types of regime changes.  For this study, continuous data are generated from a multivariate normal model, and variables are transformed to discrete and/or missing values are imputed as needed.  Due to the wide variation of possible change-point types, this study will examine many forms of regime changes, as outlined in Table~\ref{tab:sim_types}.  The forms for the change-points in Table~\ref{tab:sim_types} are subtle and meant to challenge each approach's ability to detect different types of changes.  These change-points are intentionally made more difficult to detect by considering additional data challenges, such as bimodal data, mixed data types, and missing values.  The bimodal mean change example is of particular note; it creates a bimodal distribution where the overall data mean and variance across regimes remains constant, yet the centers of each mode of the data distribution are shifted in equal and opposite directions.  

As a final challenge, data are created where the center and spread across regimes are the same, but the amount of and pattern between the missing values changes.  These missing data challenges are created by modifying the missingness indicator variables from \eqref{eq:missing_indicators} between regimes.  A change in the amount of missing values is simulated by imposing a mean change on these indicators, while a structural change is achieved by resampling the submatrix of the overall covariance matrix that corresponds to the missing indicators only\footnote{This resampling is done according to an inverse Wishart distribution \citep{storlie2018}.}.  

We compare the change-point model within Bayes Watch against industry standards: the Hotelling $T^2$ Scan Statistic (HT2) \citep{charts2012} and the Real-Time Contrasts (RTC) Method \citep{deng2012}.  In the RTC method, a classifier is fit to distinguish between a fixed window of days and a single new day (or days).  If the accuracy of this classifier is sufficiently high on a permutation of the training data, the method determines that a regime change has occurred.  For the purposes of this data simulation, a GBM is chosen as the classifier due to its robustness and ability to inherently handle missing data.  

These moving window statistics are given a memory of three days; testing if, at each new day, the data are significantly different than the previous three days.  The need to choose a window size is a drawback of these scan-type approaches; here, we chose a 1-day window due to the need to be able to detect changes the day of the change as opposed to several days later.  The authors found that the HT2 performs very well for detecting mean changes for continuous, unimodal normal data, but performs poorly for different types of change-points.  The RTC method performs poorly for all test cases.  The poor performance of the RTC method is largely due to the subtle signals used in the simulations.  For much starker regime changes, the RTC Method's performance (perhaps obviously) improves.  

Two variations of the Bayes Watch change-point model are included in the simulation: one that fits a regular GGM to each regime, and another that allows for non-normal data by fitting the full Mixture GGM.  The computation time for the Mixture GGMs is significantly longer than the regular GGMs, but this model is more general and can handle explicitly multi-modal, non-normal data.

\begin{table}
    \centering
    \tiny
    \begin{tabular}{c|c|c|c|c|c}
    Sim. & Data Type & Distribution & \makecell{Has\\NAs} & \makecell{Type of\\Regime Change} & Description \\
    \hline
    \hline
    A & Continuous & Unimodal & No & \makecell{Covariance\\Change} & \makecell{Example with a spread \\ change and missing \\ values.} \\
     \hline
     B & Continuous & Unimodal & No & \makecell{Mean\\Change} & \makecell{A shift in the center\\without data issues.}  \\
     \hline
     C & Continuous & Bimodal & No & \makecell{Mean\\Change} & \makecell{Example that moves the \\ modes of the distribution \\ without shifting the \\ overall center.} \\
     \hline
     D & Continuous & Unimodal & Yes & \makecell{Missingness\\Amount Change} & \makecell{Changes in the number\\ of data that are missing.} \\
      \hline
     E & Continuous & Unimodal & Yes & \makecell{Missingness\\Structure Change} & \makecell{Changes in the pattern\\ under which data \\ are missing} \\
     \hline
     F & Mixed & Bimodal & Yes & \makecell{Covariance\\Change} & \makecell{An increase in spread \\ example with multiple \\ data issues.} \\
    \hline
     G & Mixed & Unimodal & Yes & \makecell{Covariance\\Change} & \makecell{Same as previous test \\ case, with a unimodal \\ distribution.} \\
     \hline
     H & Mixed & Unimodal & Yes & \makecell{Mean\\Change} & \makecell{Center shift with \\ data issues.} 
    \end{tabular}
    \caption{Details on the data used in the simulation study described in Figure \ref{fig:simulation_results}.  Many different data examples are chosen to highlight the versatility of this method in detecting a broad class of change-points. }
    \label{tab:sim_types}
\end{table}

The simulation consists of 50 draws (or replications) from each of the data distributions described in Table \ref{tab:sim_types}.  A single data draw includes 30 days of data, where each day has 200 observations.  There are therefore 29 possible change-points, one after each day except the last day.  For each data set, change-points are imposed after day 14.  In each simulation, we record the point estimate of the change-point vector according to the process outlined in Section \ref{subsec:pointestimate}, using a cutoff value of 0.5.

\begin{figure}
    \centering
    \includegraphics[scale = 0.8]{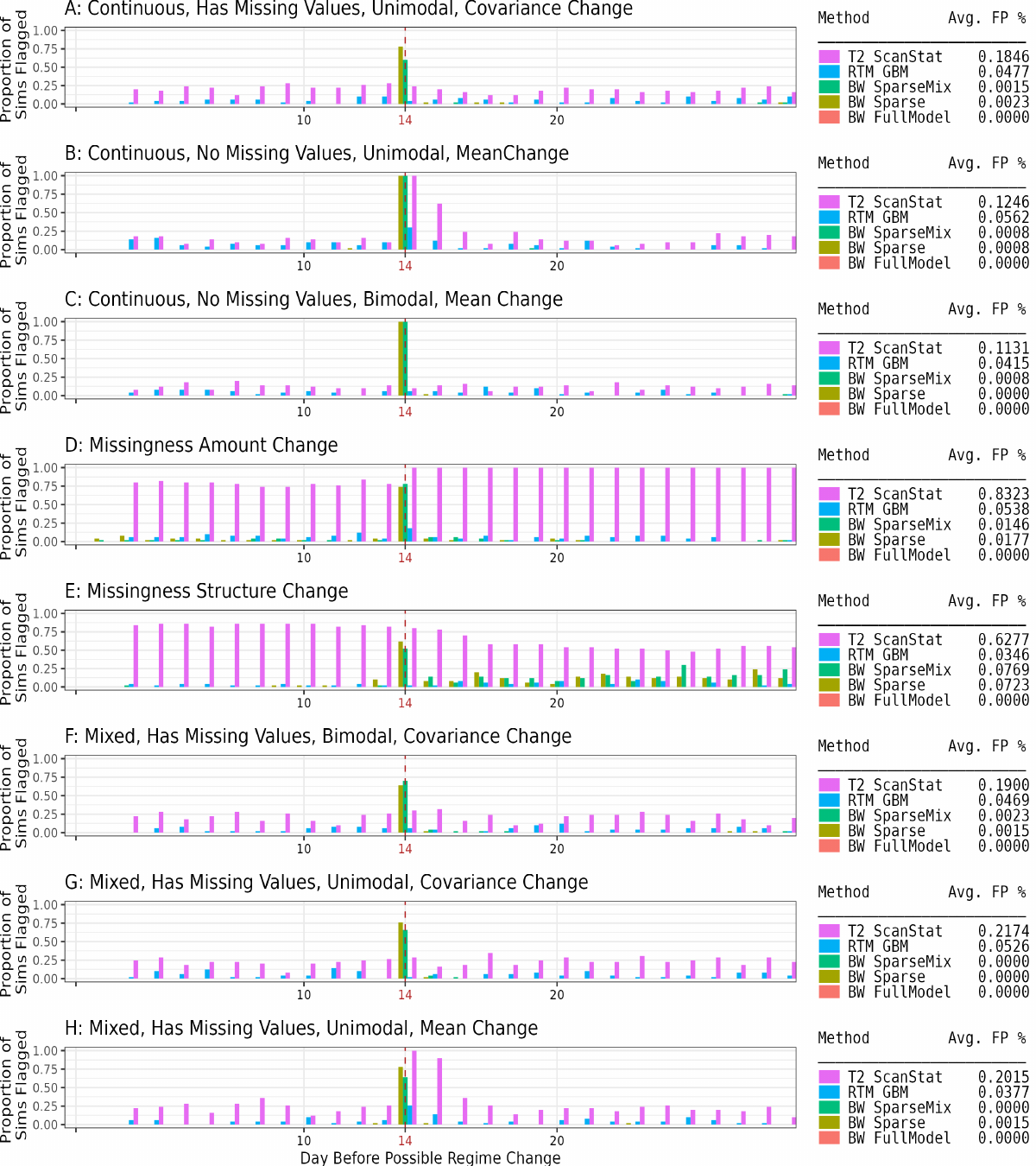}
    \caption{Simulation study over many types of regime changes; each regime change is placed directly after Day 14.  Average False Positive rates are calculated without including the two days after each change-point.}
    \label{fig:simulation_results}
\end{figure}

The results of the simulation study are outlined in Figure \ref{fig:simulation_results}.  In all test cases, the Bayes Watch change-point model performs above the level of the industry standards.  The two versions of the change-point model are the only ones able to detect the bimodal mean change example.  It is perhaps surprising that the single-component model mostly performed as well as its mixture model variant.  More flexibility comes with additional parameters to estimate and increased variance, and is thus not always better.  Since learning the mixture components also comes with a high computational cost, the authors recommend the single-component model for most applications.  Perhaps surprisingly, the single component model without a sparsity assumption (BW FullModel) performed very poorly on all test cases.  We suspect that this is because this model requires significantly more parameters to be accepted at each merge-split step of the MCMC chain, and this led to very low acceptance probabilities.

One drawback noticed in previous iterations of this simulation study is that the presence of major changes in the data sometimes prevent the Bayes Watch change-point model from identifying additional, but more subtle changes that it would normally be able to identify if the larger change were not present.  This drawback is nicely explained by a feature observed in further experiments where the False Positive Rate (FPR) drops whenever the NG-W is centered away from truth.  In these experiments, whenever the center of the NG-W was significantly different than the observed center within a regime, the algorithm was unlikely to accept changes without a signal.  This was especially true whenever there exists a regime that resembles the center of the NG-W much more closely than other regimes.  In this case, the FPR within this regime will match theoretical expectations while the FPR within the other regimes will be lower than expected.  

A major change will skew the initial estimate of the scale matrix in the NG-W prior, making the algorithm less likely to admit subtle and potential change-points.  As a direction of future work, this issue may be solved by replacing the NG-W by a two-component mixture, where one component has a prior distribution heavily concentrated at the empirical center and one component has a much more diffuse prior distribution.  Another direction for future work that may help this issue would be to have the prior distributions for a new regime be centered on the parameter values from the previous regime.

\section{Fault Detection}\label{sec:posterior}
Once a change-point is detected using Bayes Watch, it may not be immediately clear what feature of the data led to this regime change.  A major benefit of the Bayes Watch framework is that it involves fitting highly flexible models to each data regime, which are periodically saved to use for the comparison of the data distributions before and after a regime change.  This is a major boon for the model monitoring task; the method both finds change-points and identifies a change-point's origin.

Assume that a change-point is observed and let $Q_b$ and $Q_a$ be the probability measures for the data before and after the observed change-point, respectively.  Let $\mathcal{H}(Q_b,Q_a)$ be some calculation of distance/divergence between measures $Q_b$ and $Q_a$.  Let $Q_b\backslash \{i\}$ be the marginal distribution of $Q_b$ without variable $i$ and let $Q_b: \{i\}$ be the marginal distribution of variable $i$.  These marginals are particularly easy to calculate with the multivariate normal models fit to the latent data.  Define the following two metrics:
\begin{align*}
    \text{\underline{Total-Effect $\mathcal{H}(Q_b,Q_a)$ Loss of Variable $i$}:}&~ 1 - \frac{\mathcal{H}(Q_b\backslash \{i\},Q_a \backslash \{i\} )}{\mathcal{H}(Q_b,Q_a)}, \\
\text{\underline{First-Order $\mathcal{H}(Q_b,Q_a)$ Loss of Variable $i$}:}&~ \frac{\mathcal{H}(Q_b:\{i\},Q_a:\{i\} )}{\mathcal{H}(Q_b,Q_a)}.
\end{align*} 
The naming of these metrics comes from their near-analogue in variance decomposition procedures used to perform sensitivity analysis \citep{storlie2009}.  Rather than examine the change in model variance that comes from each individual variable, we examine these changes in terms of the function $\mathcal{H}$.  The aim of this approach is to isolate any individual variables that led to a change-point occurring.

There are many methods with which to quantify the difference between two data distributions \citep{gibbs2002}.  For this paper, we experimented with Kullback-Leibler divergence, Jeffrey's divergence, and Hellinger's distance\footnote{The squared Hellinger's Distance is an $f$-divergence using the function $f(t) = (\sqrt{t} - 1)^2$.  Similarly, Jeffrey's is a symmetrization of the KL-Divergence.}.  The authors found that the Hellinger's Distance had the most intuitive and reliable performance of these three.  We proceed with the analysis using Hellinger's Distance as the choice for the function $\mathcal{H}$. 

\begin{figure}
    \centering
    \includegraphics[scale=0.55]{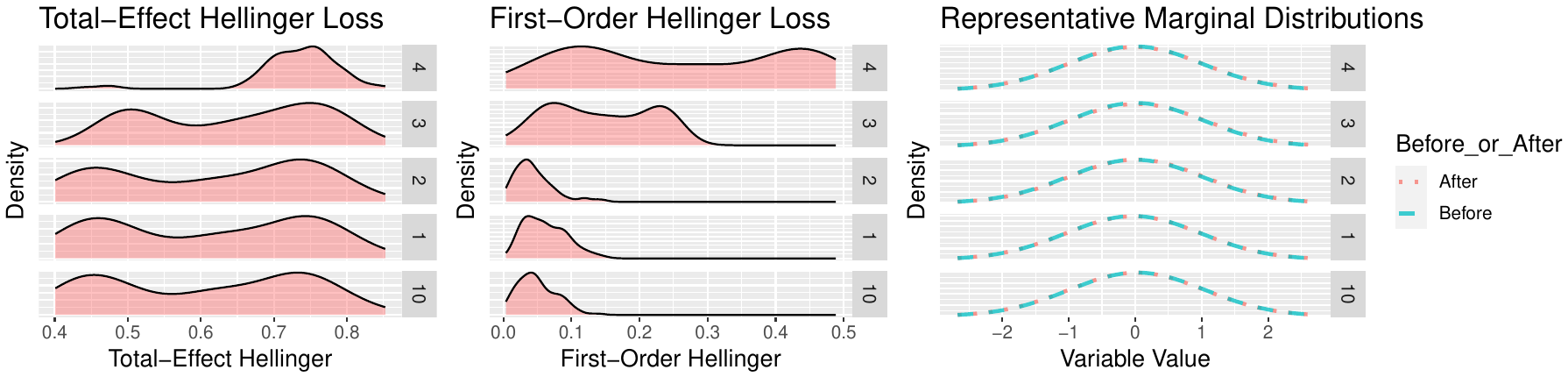}
    \caption{The Total-Effect and First-Order Hellinger distance values recorded every 5 model fits from the MCMC chain applied to Simulation $B$, where variables 3 \& 4 have the only contributions to the change-point, with 4 having the greater contribution.  The far left graph plots the marginal distributions of a represented model fit from the MCMC chain.} 
    \label{fig:posterior_analysis}
\end{figure}

We perform fault detection on a single posterior draw from Simulation $B$, saving the model fits every 5 iterations in the MCMC chain.  To save on space, this model saving scheme only records the models for the last two regimes of any given $\boldsymbol\phi$ sample.  In the case of the mean change simulation, a mean shift was only applied to variables 3 \& 4.  Amongst these two variables, variable 4 saw a greater shift than variable 3.  With these model saves, the Total-Effect and First-Order losses are calculated for every variable, which gives a posterior distribution on these two metrics for every variable.  We examine how these metrics pick out changes in individual variables in Figure \ref{fig:posterior_analysis}, where the 5 posterior distributions with the highest average metric values are graphed.  Both the Total-Effect and First-Order losses are able to identify variable $4$ as having the most significant contribution to the change-point and variable $3$ as the next most significant.  

\begin{figure}
    \centering
   \includegraphics[scale=0.6]{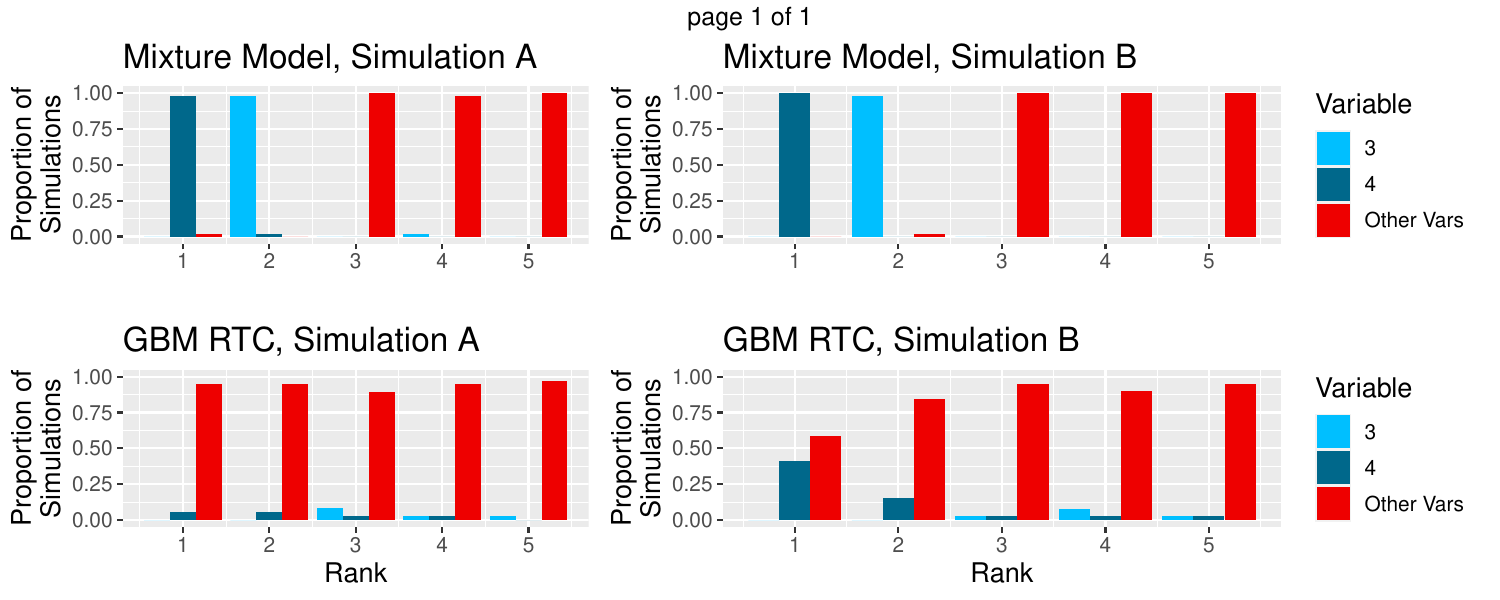}
    \caption{The average rankings of variable importance over a simulation of 50 iterations, assuming that the change-point is known.  The true change happens in variables 3 \& 4, where 4 has the more significant change.  The rankings for all other variables are graphed together in red.  The RTC method with a GBM classifier was largely unable to identify 3 \& 4 as the most important variables.  In contrast, both versions of the Bayes Watch change-point model are able to correctly identify the most and second-most significant variables nearly every time.}
    \label{fig:variable_importance}
\end{figure}

We test the Bayes Watch fault detection system's ability to detect the most significant contributions to a change-point for Simulations $A$ \& $B$.  Note that, similar to Simulation $B$, the covariance change in Simulation $A$ only occurs for variables 3 \& 4, here as an increase in variance, with $4$ seeing the more substantial increase.  At each instance of the simulation, the variables are ranked according to their average Total-Effect Loss across all model saves.  These rankings are compared against the variable importance metrics available for the GBM used in the RTC method.  For the RTC method, each variable is ordered according to its relative influence (calculated using the \textsc{gbm} package in \textsc{R} \citep{greenwell2020}).  These orderings are compared in Figure \ref{fig:variable_importance}.  The Total-Effect Loss for the mixture model and for the single-component model is able to pick out variables 3 \& 4 as the second and first most important variables in nearly every simulation, while the GBM rarely identifies the most significant contributors.

\section{Application to Palliative Care Model}\label{sec:app}
We apply the Bayes Watch framework to two segments of the data fed into the \cite{murphree2021} model, as described in Section \ref{subsec:data}.  For the first segment, we ran the change-point model for 200 iterations of the MCMC chain; a single change-point was found using the default probability cutoff value of 0.5.

\begin{figure}
    \centering
   \includegraphics[scale=0.55]{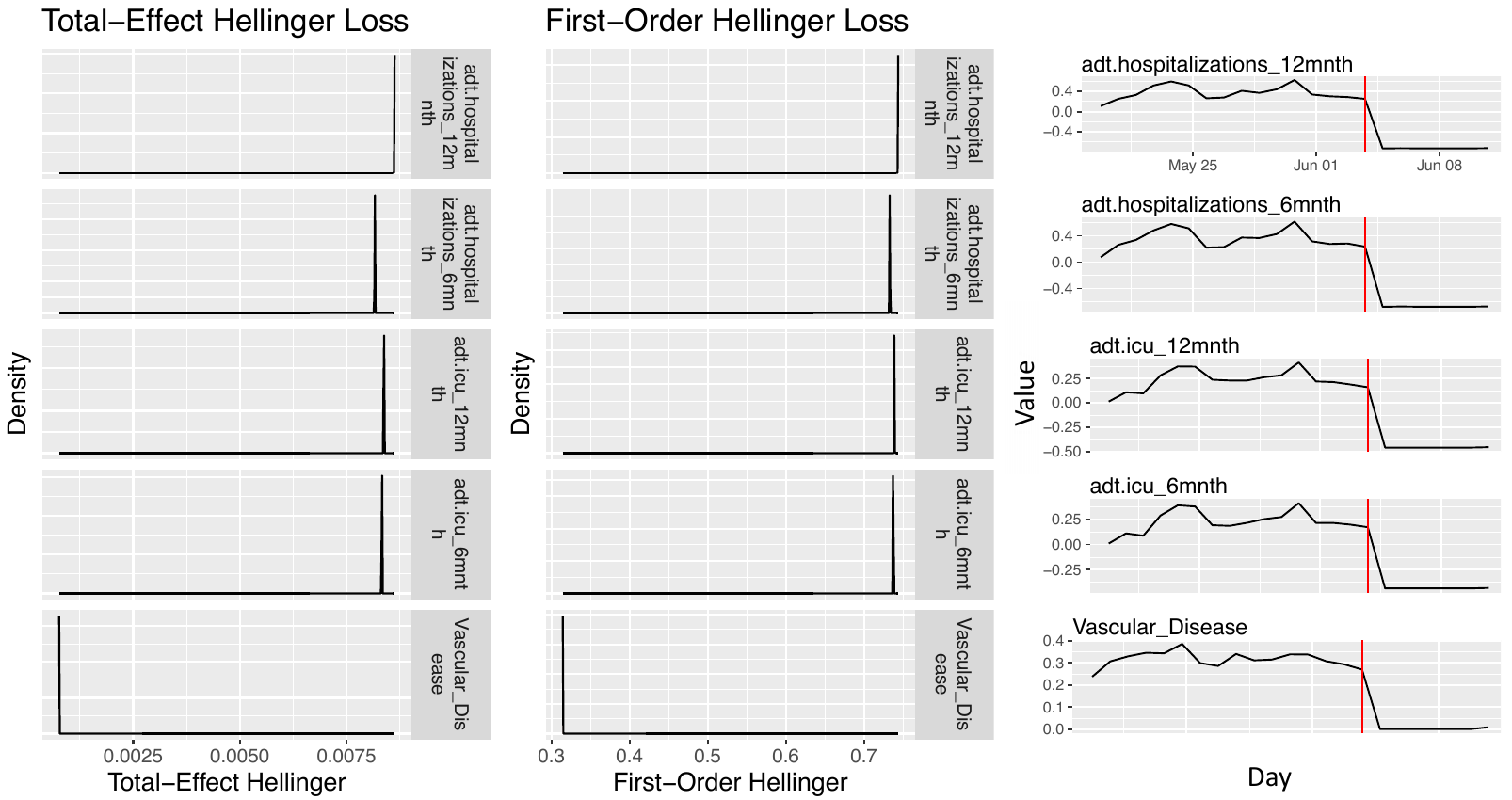}
    \caption{Posterior analysis of change-point found when applying the Bayes Watch framework to the data fed into the \cite{murphree2021} model.  Variables are ranked according to average First-Order Loss.  In lieu of the marginals provided in Figure \ref{fig:posterior_analysis}, we graph the day averages over time on the far right.  The vertical red lines on these timeseries mark the location of the found change-point.  This change-point is likely the result of many variables flattening to zero in early June.  }
    \label{fig:application_posterior}
\end{figure}

As outlined in Figure \ref{fig:application_posterior}, the Bayes Watch change-point model found a single change-point in early June. The variable importance scheme identified that this change-point was likely a result of many different variables flattening to zero. The timeseries on the far right in Figure \ref{fig:application_posterior} show the average daily values of each variable. There is no compelling reason in the data for such a drastic change in these values to occur naturally, which suggests that this change is due to a data pipeline issue, or a change of policy in how data are collected.  This issue was eventually noticed at the time by the palliative care modeling team.  It was a data pipeline issue causing the prior utilization data to no longer be received by the model. It did not cause a model error, as it is permissible for individual patients to have no prior history in our health system.  However, this data pipeline issue was not noticed and not acted upon until several days later.  Had the Bayes watch model been running at the time, it would have detected this issue on day 1 of the change and steps would have been taken to resolve the issue sooner.
 
Notice that ranking the variables according to Total-Effect Loss would give a different ranking than the one seen in Figure \ref{fig:posterior_analysis}, which ranks according to First-Order Loss. We also investigated the Total-Effect ordering, and found that the top eight most significant variables are the same regardless of the metric considered used to order them (the variables ‘Metastatic Cancer and Acute Leukemia,’ ‘Lymphoma and Other Cancers,’ and ‘Seizure Disorders and Convulsions’ are not pictured in Figure \ref{fig:posterior_analysis}). All variables, under both orderings, see a dramatic decrease in average value at the beginning of June 2020. Considering the starkness of the change in value among all of these variables, we would suggest that they all be investigated, regardless of their relative importance according to Total-Effect and First-Order losses.

As discussed in Section \ref{sec:sim}, one drawback of this approach is that the presence of major change-points will prevent Bayes Watch from identifying more subtle changes in the data.  A work-around for this issue would be to subset the data to only include observations after the change-point once a change-point is found, then re-run this algorithm on the subset data.  As mentioned previously, future iterations of this algorithm may use a mixture model prior to allow Bayes Watch to pick up major and minor changes in the same run.

\begin{figure}
    \centering
   \includegraphics[scale=0.55]{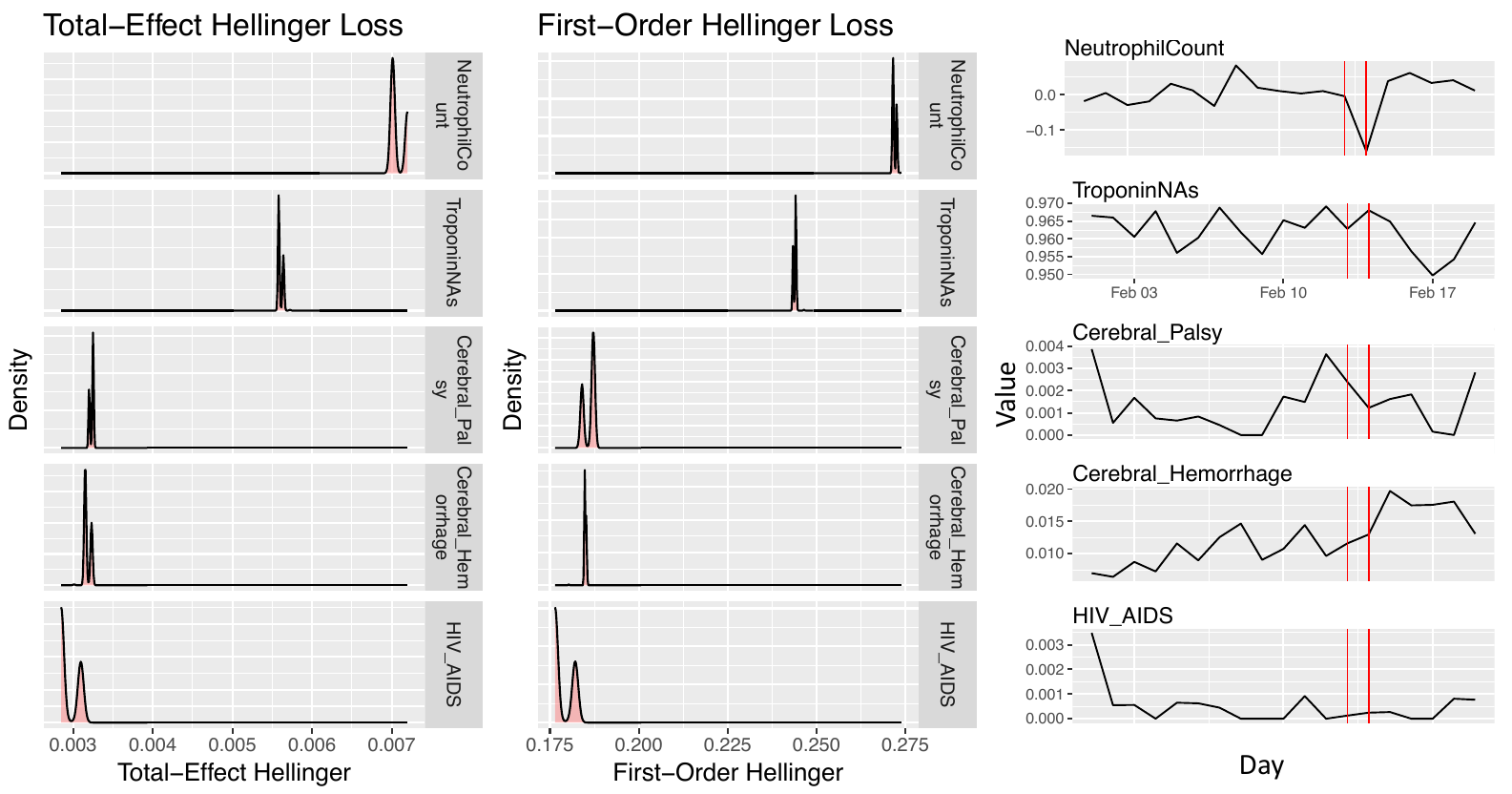}
    \caption{Posterior analysis of change-point found when applying the Bayes Watch framework to the data fed into the \cite{murphree2021} model during February 2023.  Variables are ranked according to average First-Order Loss from the second change-point.  In lieu of the marginals provided in Figure \ref{fig:posterior_analysis}, we graph the day averages over time on the far right.  The vertical red lines on these timeseries mark the location of the found change-points.  This change-point is likely driven by changes in the first two variables: `NeutrophilCount' and `TroponinNAs.'  }
    \label{fig:second_application_posterior}
\end{figure}


We examine a second segment of the palliative care mdoel data, taken from February 2020, prior to many of the shut-downs caused by the COVID-19 pandemic.  Bayes Watch determined two change-points that occurred directly one after another: one on 02/13/20 and one on 02/14/20.  The variables most influential for this particular detection were the missingness of Troponin lab and the value of NeutrophilCount.  This change point was not a known issue for the palliative care model and after further investigation, it is still unclear why Troponin lab values became missing more frequently in late February and stayed that way for an extended period.  To the best of our knowledge there was no change in policy or protocol that would have caused this change.  This would have had an impact on  model scores and performance though, because the presence of a Troponin lab value is an informative feature; the presence of a Troponin lab indicates that the clinician had suspicion of a serious cardiac issue for a patient.  It is possible that this was just a random anomaly as it did eventually resolve, and it is difficult now several years later to uncover the root cause.  However, had the Bayes watch algorithm been running at this time, it would have warranted an investigation.

\section{Discussion}\label{sec:conclusion}
The model-monitoring data challenge presented by the Mayo Clinic data scientists is not necessarily unique to the \cite{murphree2021} model.  Especially in healthcare, there exist many longitudinal datasets with mixed variables types, high dimensions ($p>200$), and missing values.  While the primary aim of this paper is to address the problem of  monitoring predictive models in health care settings, the authors hope that Bayes Watch will become a strong choice for any model monitoring task.  

Bayes Watch is a framework for locating and assessing change-points in a data stream, devised for the purpose of monitoring the performance of in-process models.  This model is extremely versatile, able to identify many different types of changes, both in the center and spread of the underlying data distribution and in the nuanced changes that can occur in the way in which data are missing.  The change-point model is further able to identify regime changes in spite of significant data challenges, such as with missing data and mixed data types.  The simulation study shows the model's superiority over two industry standards: Hotelling's $T^2$ scan statistic and the RTC method using a GBM classifier. 

The analysis of the data fed into the \cite{murphree2021} palliative care model located a major change-point in early June 2020.  Researchers at Mayo had found this change-point, but only well after it had occurred.  This analysis also determined the specific variables that seemed to "flat-line" at the change-point, illustrating how Bayes Watch is able to both locate change-points and diagnosis what changes led to a change-point.  

To investigate more subtle change-points, we also investigated data fed into the \cite{murphree2021} model prior to the major shut-downs that occurred due to the COVID-19 pandemic. We discovered a subtle change-point in late February that would have warranted an investigation by the palliative care model team had Bayes Watch been run on this data stream several years ago when the change occurred.

There are several directions for future work.  As discussed in Section \ref{sec:sim}, the FPR is sensitive to the choice of the hyperpriors in the NG-W distribution.  The choice of hyperpriors affects the change-point model's ability to distinguish subtle changes when stark changes are also present.  One potential solution to this issue is to replace the NG-W distribution by a two-component variant.  In one component, the choice for the scale matrix and degrees of freedom strongly center the distribution at the empirical center, while in the other component the distribution is chosen to be more diffuse.  

As a second direction, the parameter values for the NG-W distribution could be sampled using a random walk centered at the parameter values from the previous regime.  This would be in line with expectations for a time series; the amount that the center and spread in the data distribution changes should be somewhat dependent on where these features were directly before the change.  The present version of the algorithm does not code this dependence, and instead samples new parameter values from a prior that is ignorant of this direct history.

A final direction for future work would be to treat data objects as patients rather than as individual observations on patients, as this would perhaps better represent the actual structure of the data studied in this paper.  While we have shown strong potential for the version of this model that essentially ignores whether individual observations belong to the same patient, we are hopeful that the accuracy and relevance of this model might be improved even more by explicitly modeling the data and trajectory of an entire patient encounter.



\bibliographystyle{rss}
\spacingset{1}
\vspace*{0.5cm}
\bibliography{main}  

\begin{thebibliography}{55}
\expandafter\ifx\csname natexlab\endcsname\relax\def\natexlab#1{#1}\fi
\expandafter\ifx\csname url\endcsname\relax
  \def\url#1{\texttt{#1}}\fi
\expandafter\ifx\csname urlprefix\endcsname\relax\def\urlprefix{URL }\fi

\bibitem[{Atay-Kayis and Massam(2005)}]{atay-kayis2005}
Atay-Kayis, A. and Massam, H. (2005) A monte carlo method for computing the
  marginal likelihood in nondecomposable gaussian graphical models.
\newblock \textit{Biometrika}, \textbf{92}, 317--335.

\bibitem[{Basseville and Nikiforov(1993)}]{basseville1993}
Basseville, M. and Nikiforov, I.~V. (1993) \textit{Detection of Abrupt Changes:
  Theory and Application}.
\newblock USA: Prentice-Hall, Inc.

\bibitem[{Beal \textit{et~al.}(2002)Beal, Ghahramani and Rasmussen}]{beal2002}
Beal, M., Ghahramani, Z. and Rasmussen, C. (2002) The infinite hidden markov
  model.
\newblock In \textit{Advances in Neural Information Processing Systems} (eds.
  T.~Dietterich, S.~Becker and Z.~Ghahramani), vol.~14. MIT Press.
\newblock
  \urlprefix\url{https://proceedings.neurips.cc/paper/2001/file/e3408432c1a48a52fb6c74d926b38886-Paper.pdf}.

\bibitem[{Bhattacharya and Dunson(2012)}]{bhattacharya2012b}
Bhattacharya, A. and Dunson, D.~B. (2012) Simplex factor models for
  multivariate unordered categorical data.
\newblock \textit{J Am Stat Assoc}, \textbf{107}, 362--377.

\bibitem[{Bishop(2006)}]{bishop2006}
Bishop, C.~M. (2006) \textit{Pattern Recognition and Machine Learning
  (Information Science and Statistics)}.
\newblock Berlin, Heidelberg: Springer-Verlag.

\bibitem[{Brodsky and Darkhovsky(1993)}]{brodsky1994}
Brodsky, B.~E. and Darkhovsky, B.~S. (1993) \textit{Nonparametric Methods in
  Change Point Problems}.
\newblock Dordrecht: Springer.

\bibitem[{Buntine and Hutter(2012)}]{buntine2012}
Buntine, W. and Hutter, M. (2012) A bayesian view of the poisson-dirichlet
  process.

\bibitem[{Chen and Gupta(2014)}]{chen2014}
Chen, J. and Gupta, A. (2014) \textit{Parametric statistical change point
  analysis: With applications to genetics, medicine, and finance}.
\newblock Birkhauser Boston.
\newblock Publisher Copyright: {\textcopyright} Springer Science+Business
  Media, LLC 2000, 2012. All rights reserved.

\bibitem[{Chernoff and Zacks(1964)}]{chernoff1964}
Chernoff, H. and Zacks, S. (1964) Estimating the current mean of a normal
  distribution which is subjected to changes in time.
\newblock \textit{The Annals of Mathematical Statistics}, \textbf{35}, 999 --
  1018.
\newblock \urlprefix\url{https://doi.org/10.1214/aoms/1177700517}.

\bibitem[{Chib(1998)}]{chib1998}
Chib, S. (1998) Estimation and comparison of multiple change-point models.
\newblock \textit{Journal of Econometrics}, \textbf{86}, 221--241.

\bibitem[{Cho and Fryzlewicz(2015)}]{cho2015}
Cho, H. and Fryzlewicz, P. (2015) Multiple-change-point detection for high
  dimensional time series via sparsified binary segmentation.
\newblock \textit{Journal of the Royal Statistical Society. Series B
  (Statistical Methodology)}, \textbf{77}, 475--507.
\newblock \urlprefix\url{http://www.jstor.org/stable/24774746}.

\bibitem[{Chopin(2007)}]{chopin2007}
Chopin, N. (2007) Dynamic detection of change points in long time series.
\newblock \textit{Annals of the Institute of Statistical Mathematics},
  \textbf{59}, 349--366.

\bibitem[{Corradin \textit{et~al.}(2022)Corradin, Danese and
  Ongaro}]{corradin2022}
Corradin, R., Danese, L. and Ongaro, A. (2022) Bayesian nonparametric change
  point detection for multivariate time series with missing observations.
\newblock \textit{Int. J. Approx. Reasoning}, \textbf{143}, 26–43.
\newblock \urlprefix\url{https://doi.org/10.1016/j.ijar.2021.12.019}.

\bibitem[{Deng \textit{et~al.}(2012)Deng, Runger and Tuv}]{deng2012}
Deng, H., Runger, G. and Tuv, E. (2012) System monitoring with real-time
  contrasts.
\newblock \textit{Journal of Quality Technology}, \textbf{44}, 9--27.

\bibitem[{Dobra and Lenkoski(2011)}]{dobra2011}
Dobra, A. and Lenkoski, A. (2011) Copula gaussian graphical models and their
  application to modeling functional disability data.
\newblock \textit{The Annals of Applied Statistics}, \textbf{5}, 969 -- 993.
\newblock \urlprefix\url{https://doi.org/10.1214/10-AOAS397}.

\bibitem[{Everitt(1996)}]{everitt1996}
Everitt, B. (1996) An introduction to finite mixture distributions.
\newblock \textit{Statistical Methods in Medical Research}, \textbf{5},
  107--127.
\newblock \urlprefix\url{https://doi.org/10.1177/096228029600500202}.
\newblock PMID: 8817794.

\bibitem[{Fearnhead and Liu(2007)}]{fearnhead2007}
Fearnhead, P. and Liu, Z. (2007) On-line inference for multiple changepoint
  problems.
\newblock \textit{Journal of the Royal Statistical Society: Series B
  (Statistical Methodology)}, \textbf{69}, 589--605.
\newblock
  \urlprefix\url{https://rss.onlinelibrary.wiley.com/doi/abs/10.1111/j.1467-9868.2007.00601.x}.

\bibitem[{Friedman \textit{et~al.}(2008)Friedman, Hastie and
  Tibshirani}]{friedman2008}
Friedman, J., Hastie, T. and Tibshirani, R. (2008) Sparse inverse covariance
  estimation with the graphical lasso.
\newblock \textit{Biostatistics}, \textbf{9}, 432--441.

\bibitem[{Fryzlewicz(2014)}]{fryzlewicz2014}
Fryzlewicz, P. (2014) Wild binary segmentation for multiple change-point
  detection.
\newblock \textit{The Annals of Statistics}, \textbf{42}, 2243 -- 2281.
\newblock \urlprefix\url{https://doi.org/10.1214/14-AOS1245}.

\bibitem[{Gibbs and Su(2002)}]{gibbs2002}
Gibbs, A.~L. and Su, F.~E. (2002) On choosing and bounding probability metrics.
\newblock \textit{International Statistical Review}, \textbf{70}, 419--435.
\newblock
  \urlprefix\url{https://onlinelibrary.wiley.com/doi/abs/10.1111/j.1751-5823.2002.tb00178.x}.

\bibitem[{Green(1995)}]{green1995}
Green, P.~J. (1995) Reversible jump markov chain monte carlo computation and
  bayesian model determination.
\newblock \textit{Biometrika}, \textbf{82}, 711--732.
\newblock \urlprefix\url{https://doi.org/10.1093/biomet/82.4.711}.

\bibitem[{Greenwell \textit{et~al.}(2020)Greenwell, Boehmke, Cunningham and
  Developers}]{greenwell2020}
Greenwell, B., Boehmke, B., Cunningham, J. and Developers, G. (2020)
  \textit{gbm: Generalized Boosted Regression Models}.
\newblock \urlprefix\url{https://CRAN.R-project.org/package=gbm}.
\newblock R package version 2.1.8.

\bibitem[{Guerrero(1993)}]{guerrero1993}
Guerrero, V.~M. (1993) Time-series analysis supported by power transformations.
\newblock \textit{Journal of Forecasting}, \textbf{12}, 37--48.

\bibitem[{Hotelling(1931)}]{charts2012}
Hotelling, H. (1931) {The Generalization of Student's Ratio}.
\newblock \textit{The Annals of Mathematical Statistics}, \textbf{2}, 360 --
  378.
\newblock \urlprefix\url{https://doi.org/10.1214/aoms/1177732979}.

\bibitem[{Jones \textit{et~al.}(2005)Jones, Carvalho, Dobra, Hans, Carter and
  West}]{jones2005}
Jones, B., Carvalho, C., Dobra, A., Hans, C., Carter, C. and West, M. (2005)
  Experiments in stochastic computation for high-dimensional graphical models.
\newblock \textit{Statistical Science}, \textbf{20}, 388 -- 400.
\newblock \urlprefix\url{https://doi.org/10.1214/088342305000000304}.

\bibitem[{Kim \textit{et~al.}(2006)Kim, Tadesse and Vannucci}]{kim2006}
Kim, S., Tadesse, M.~G. and Vannucci, M. (2006) Variable selection in
  clustering via dirichlet process mixture models.
\newblock \textit{Biometrika}, \textbf{93}, 877--893.

\bibitem[{Ko \textit{et~al.}(2015)Ko, Chong and Ghosh}]{ko2015}
Ko, S. I.~M., Chong, T. T.~L. and Ghosh, P. (2015) {Dirichlet Process Hidden
  Markov Multiple Change-point Model}.
\newblock \textit{Bayesian Analysis}, \textbf{10}, 275 -- 296.

\bibitem[{Koop and Potter(2007)}]{koop2007}
Koop, G. and Potter, S.~M. (2007) Estimation and forecasting in models with
  multiple breaks.
\newblock \textit{The Review of Economic Studies}, \textbf{74}, 763--789.
\newblock \urlprefix\url{http://www.jstor.org/stable/4626160}.

\bibitem[{Lavielle and Teyssi{\`e}re(2007)}]{lavielle2007}
Lavielle, M. and Teyssi{\`e}re, G. (2007) \textit{Adaptive Detection of
  Multiple Change-Points in Asset Price Volatility}, 129--156.
\newblock Berlin, Heidelberg: Springer Berlin Heidelberg.
\newblock \urlprefix\url{https://doi.org/10.1007/978-3-540-34625-8_5}.

\bibitem[{Lenkoski(2013)}]{lenkoski2013}
Lenkoski, A. (2013) A direct sampler for g-wishart variates.
\newblock \textit{Stat}, \textbf{2}, 119--128.

\bibitem[{Lenkoski and Dobra(2011)}]{lenkoski2011}
Lenkoski, A. and Dobra, A. (2011) Computational aspects related to inference in
  gaussian graphical models with the g-wishart prior.
\newblock \textit{Journal of Computational and Graphical Statistics},
  \textbf{20}, 140--157.
\newblock \urlprefix\url{https://doi.org/10.1198/jcgs.2010.08181}.

\bibitem[{Letac and Massam(2007)}]{letac2007}
Letac, G. and Massam, H. (2007) {Wishart distributions for decomposable
  graphs}.
\newblock \textit{The Annals of Statistics}, \textbf{35}, 1278 -- 1323.
\newblock \urlprefix\url{https://doi.org/10.1214/009053606000001235}.

\bibitem[{Londschien \textit{et~al.}(2021)Londschien, Kov{\'a}cs and
  B{\"u}hlmann}]{londschien2021}
Londschien, M., Kov{\'a}cs, S. and B{\"u}hlmann, P. (2021) Change-point
  detection for graphical models in the presence of missing values.
\newblock \textit{Journal of Computational and Graphical Statistics},
  \textbf{30}, 768--779.

\bibitem[{MacEachern and M{\"u}ller(1998)}]{maceachern1998}
MacEachern, S.~N. and M{\"u}ller, P. (1998) Estimating mixture of dirichlet
  process models.
\newblock \textit{Journal of Computational and Graphical Statistics},
  \textbf{7}, 223--238.

\bibitem[{Mart{\'\i}nez and Mena(2014)}]{martinez2014}
Mart{\'\i}nez, A.~F. and Mena, R.~H. (2014) On a nonparametric change point
  detection model in markovian regimes.
\newblock \textit{Bayesian Analysis}, \textbf{9}, 823 -- 858.

\bibitem[{Mohammadi and Wit(2015)}]{mohammadi2015}
Mohammadi, A. and Wit, E.~C. (2015) {Bayesian Structure Learning in Sparse
  Gaussian Graphical Models}.
\newblock \textit{Bayesian Analysis}, \textbf{10}, 109 -- 138.
\newblock \urlprefix\url{https://doi.org/10.1214/14-BA889}.

\bibitem[{Mohammadi \textit{et~al.}(2021)Mohammadi, Massam and
  Letac}]{mohammadi2021}
Mohammadi, R., Massam, H. and Letac, G. (2021) Accelerating bayesian structure
  learning in sparse gaussian graphical models.
\newblock \textit{Journal of the American Statistical Association}, \textbf{0},
  1--14.
\newblock \urlprefix\url{https://doi.org/10.1080/01621459.2021.1996377}.

\bibitem[{Murphree \textit{et~al.}(2021)Murphree, Wilson, Asai, Quest, Lin,
  Mukherjee, Chhugani, Strand, Demuth, Mead, Wright, Harrison, Soleimani,
  Herasevich, Pickering and Storlie}]{murphree2021}
Murphree, D.~H., Wilson, P.~M., Asai, S.~W., Quest, D.~J., Lin, Y., Mukherjee,
  P., Chhugani, N., Strand, J.~J., Demuth, G., Mead, D., Wright, B., Harrison,
  A., Soleimani, J., Herasevich, V., Pickering, B.~W. and Storlie, C.~B. (2021)
  {Improving the delivery of palliative care through predictive modeling and
  healthcare informatics}.
\newblock \textit{Journal of the American Medical Informatics Association},
  \textbf{28}, 1065--1073.

\bibitem[{Murray \textit{et~al.}(2006)Murray, Ghahramani and
  MacKay}]{murray2006}
Murray, I., Ghahramani, Z. and MacKay, D. J.~C. (2006) Mcmc for
  doubly-intractable distributions.
\newblock In \textit{Proceedings of the Twenty-Second Conference on Uncertainty
  in Artificial Intelligence}, UAI'06, 359--366. Arlington, Virginia, USA: AUAI
  Press.

\bibitem[{Nam \textit{et~al.}(2012)Nam, Aston and Johansen}]{nam2012}
Nam, C. F.~H., Aston, J. A.~D. and Johansen, A.~M. (2012) Quantifying the
  uncertainty in change points.
\newblock \textit{Journal of Time Series Analysis}, \textbf{33}, 807--823.
\newblock
  \urlprefix\url{https://onlinelibrary.wiley.com/doi/abs/10.1111/j.1467-9892.2011.00777.x}.

\bibitem[{Ombao \textit{et~al.}(2005)Ombao, von Sachs and Guo}]{ombao2005}
Ombao, H., von Sachs, R. and Guo, W. (2005) Slex analysis of multivariate
  nonstationary time series.
\newblock \textit{Journal of the American Statistical Association},
  \textbf{100}, 519--531.
\newblock \urlprefix\url{http://www.jstor.org/stable/27590574}.

\bibitem[{Page(1954)}]{page1954}
Page, E.~S. (1954) Continuous inspection schemes.
\newblock \textit{Biometrika}, \textbf{41}, 100--115.
\newblock \urlprefix\url{http://www.jstor.org/stable/2333009}.

\bibitem[{{R Core Team}(2021)}]{r2021}
{R Core Team} (2021) \textit{R: A Language and Environment for Statistical
  Computing}.
\newblock R Foundation for Statistical Computing, Vienna, Austria.
\newblock \urlprefix\url{https://www.R-project.org/}.

\bibitem[{Roverato(2002)}]{roverato2002}
Roverato, A. (2002) Hyper inverse wishart distribution for non-decomposable
  graphs and its application to bayesian inference for gaussian graphical
  models.
\newblock \textit{Scandinavian Journal of Statistics}, \textbf{29}, 391--411.
\newblock
  \urlprefix\url{https://onlinelibrary.wiley.com/doi/abs/10.1111/1467-9469.00297}.

\bibitem[{Ruggieri(2013)}]{ruggieri2013}
Ruggieri, E. (2013) A bayesian approach to detecting change points in climatic
  records.
\newblock \textit{International Journal of Climatology}, \textbf{33}, 520--528.

\bibitem[{Scutari(2013)}]{scutari2013}
Scutari, M. (2013) {On the Prior and Posterior Distributions Used in Graphical
  Modelling}.
\newblock \textit{Bayesian Analysis}, \textbf{8}, 505 -- 532.
\newblock \urlprefix\url{https://doi.org/10.1214/13-BA819}.

\bibitem[{Shewhart(1931)}]{shewhart1931}
Shewhart, W.~A. (1931) \textit{Economic control of quality of manufactured
  product}.
\newblock The Bell Telephon Laboratories Series. New York: D. Van Nostrand
  Company, Inc.

\bibitem[{Sobol'(2001)}]{sobol2001}
Sobol', I. (2001) Global sensitivity indices for nonlinear mathematical models
  and their monte carlo estimates.
\newblock \textit{Mathematics and Computers in Simulation}, \textbf{55},
  271--280.
\newblock
  \urlprefix\url{https://www.sciencedirect.com/science/article/pii/S0378475400002706}.
\newblock The Second IMACS Seminar on Monte Carlo Methods.

\bibitem[{Speed and Kiiveri(1986)}]{speed1986}
Speed, T.~P. and Kiiveri, H.~T. (1986) {Gaussian Markov Distributions over
  Finite Graphs}.
\newblock \textit{The Annals of Statistics}, \textbf{14}, 138 -- 150.
\newblock \urlprefix\url{https://doi.org/10.1214/aos/1176349846}.

\bibitem[{Storlie \textit{et~al.}(2018)Storlie, Myers, Katusic, Weaver, Voigt,
  Croarkin, Stoeckel and Port}]{storlie2018}
Storlie, C.~B., Myers, S.~M., Katusic, S.~K., Weaver, A.~L., Voigt, R.~G.,
  Croarkin, P.~E., Stoeckel, R.~E. and Port, J.~D. (2018) Clustering and
  variable selection in the presence of mixed variable types and missing data.
\newblock \textit{Statistics in Medicine}, \textbf{37}, 2884--2899.
\newblock
  \urlprefix\url{https://onlinelibrary.wiley.com/doi/abs/10.1002/sim.7697}.

\bibitem[{Storlie \textit{et~al.}(2009)Storlie, Swiler, Helton and
  Sallaberry}]{storlie2009}
Storlie, C.~B., Swiler, L.~P., Helton, J.~C. and Sallaberry, C.~J. (2009)
  Implementation and evaluation of nonparametric regression procedures for
  sensitivity analysis of computationally demanding models.
\newblock \textit{Reliability Engineering \& System Safety}, \textbf{94},
  1735--1763.
\newblock
  \urlprefix\url{https://www.sciencedirect.com/science/article/pii/S0951832009001112}.

\bibitem[{Truong \textit{et~al.}(2020)Truong, Oudre and Vayatis}]{truong2020}
Truong, C., Oudre, L. and Vayatis, N. (2020) Selective review of offline change
  point detection methods.
\newblock \textit{Signal Process.}, \textbf{167}.
\newblock \urlprefix\url{https://doi.org/10.1016/j.sigpro.2019.107299}.

\bibitem[{Uhler \textit{et~al.}(2018)Uhler, Lenkoski and Richards}]{uhler2018}
Uhler, C., Lenkoski, A. and Richards, D. (2018) {Exact formulas for the
  normalizing constants of Wishart distributions for graphical models}.
\newblock \textit{The Annals of Statistics}, \textbf{46}, 90 -- 118.
\newblock \urlprefix\url{https://doi.org/10.1214/17-AOS1543}.

\bibitem[{Vostrikova(1981)}]{vostrikova1981}
Vostrikova, L.~Y. (1981) Detecting ''disorder'' in multidimensional random
  processes.
\newblock \textit{Sov. Math., Dokl.}, \textbf{24}, 55--59.

\bibitem[{Wang \textit{et~al.}(2012)Wang, Hannig and Iyer}]{wang2012}
Wang, C., Hannig, J. and Iyer, H.~K. (2012) Fiducial prediction intervals.
\newblock \textit{Journal of Statistical Planning and Inference}, \textbf{142},
  1980 -- 1990.
\newblock
  \urlprefix\url{http://www.sciencedirect.com/science/article/pii/S0378375812000687}.

\end{thebibliography}

\newpage

\begin{center}
    \huge SUPPLEMENTARY MATERIAL
\end{center}

\begin{appendix}
\spacingset{1.75} 

\section{Information on Covariates in Data Application}\label{a:data_vars}

The table in this section is from the Supplementary Material of \cite{murphree2021}.  As mentioned, the analysis in this paper uses the same data.

     \small
    \begin{longtable}{| p{.50\textwidth} | p{.20\textwidth} | p{.20\textwidth} |} 
    \hline
    Variable & Group & Data Type \\
    \hline
    days\_since\_pc & Utilization & continuous \\
    UNIT.TRANSFER & Other & nominal \\
    Troponin & Lab & continuous \\
    Bilirubin & Lab & continuous \\
    Albumin & Lab & continuous \\
    AnionGap & Lab & continuous \\
    NeutrophilCount & Lab & continuous \\
    AST & Lab & continuous \\
    Lactate & Lab & continuous \\
    INR & Lab & continuous \\
    Creatine & Lab & continuous \\
    ALT & Lab & continuous \\
    Phosphate & Lab & continuous \\
    Hct & Lab & continuous \\
    Hb & Lab & continuous \\
    PartPressureArterialO2 & Lab & continuous \\
    Sodium & Lab & continuous \\
    age & Demographics & continuous \\
    Ammonia & Lab & continuous \\
    Metastatic\_Cancer\_and\_Acute\_Leukemia & Comorbidity & binary \\
    adt.hostitalizations\_12mnth & Utilization & continuous \\
    adt.hospitalizations\_6mnth & Utilization & continuous \\
    ICUTRANSFER & Other & binary \\
    Potassium & Lab & continuous \\
    Calcium & Lab & continuous \\
    Platelets & Lab & continuous \\
    BicaronateArterialBloodGas & Lab & continuous \\
    BaseDeficit & Lab & continuous \\
    pHArterialBloodGas & Lab & continuous \\\
    adt.icu\_12mnth & Utilization & continuous \\
    Lung\_and\_Other\_Severe\_Cancers & Comorbidity & binary \\
    adt.icu\_6mnth & Utilization & continuous \\
    APTT & Lab & continuous \\
    cameFrom & Demographics & nominal \\
    Lipase & Lab & continuous \\
    community & Demographics & binary \\
    Pressure\_Pre\_Ulcer\_Skin\_Changes\_or\_Un & Comorbidity & binary \\
    Septicemia\_Sepsis\_Systemic\_Inflammato & Comorbidity & binary \\
    Congestive\_Heart\_Failure & Comorbidity & binary \\
    Aspiration\_and\_Specified\_Bacterial\_Pn & Comorbidity & binary \\
    Specified\_Heart\_Arrhythmias & Comorbidity & binary \\
    Disorders\_of\_Immunity & Comorbidity & binary \\
    Acute\_Renal\_Failure & Comorbidity & binary \\
    Cardio\_Respiratory\_Failure\_and\_Shock & Comorbidity & binary \\
    Dementia\_Without\_Complication & Comorbidity & binary \\
    Chronic\_Kidney\_Disease\_Severe\_State\_4 & Comorbidity & binary \\ 
    Pressure\_Ulcer\_of\_Skin\_with\_Full\_Thic & Comorbidity & binary \\
    Chronic\_Ulcer\_of\_Skin\_Except\_Pressure & Comorbidity & binary \\ 
    Hip\_Fracture\_Dislocation & Comorbidity & binary \\
    Chronic\_Obstructive\_Pulmonary\_Disease & Comorbidity & binary \\
    Acute\_Myocardial\_Infraction & Comorbidity & binary \\
    Amylase & Lab & binary \\
    Diabetes\_with\_Chronic\_Complications & Comorbidity & binary \\
    Chronic\_Kidney\_Disease\_Moderate\_Stage & Comorbidity & binary \\
    Lymphoma\_and\_Other\_Cancers & Comorbidity & binary \\
    Opportunisitic\_Infections & Comorbidity & binary \\
    Atherosclerosis\_of\_the\_Extremities\_wi & Comorbidity & binary \\
    Pneumococcal\_Pneumonia\_Empyema\_Lunch\_A & Comorbidity & binary \\
    Coagulation\_Defects\_and\_Other\_Specifi & Comorbidity & binary \\
    Chronic\_Kidney\_Disease\_Mild\_or\_Unspec & Comorbidity & binary \\
    Vertebral\_Fractures\_without\_Spinal\_co & Comorbidity & binary \\
    Severe\_Hematological\_Disorders & Comorbidity & binary \\
    Protein\_Calorie\_Malnutrition & Comorbidity & binary \\
    Dialysis\_Status & Comorbidity & binary \\
    Cirrhosis\_of\_Liver & Comorbidity & binary \\
    Pressure\_Ulcer\_of\_Skin\_with\_Partial\_T & Comorbidity & binary \\
    Intestinal\_Obstruction\_Perforation & Comorbidity & binary \\
    Polyneuropathy & Comorbidity & binary \\
    Nephritis & Comorbidity & binary \\
    Fibrinogen & Lab & continuous \\
    Vascular\_Disease & Comorbidity & binary \\
    Amputation\_Status\_Lover\_Limb\_Amputati & Comorbidity & binary \\
    Breast\_Prostate\_and\_Other\_Cancers\_and & Comorbidity & binary \\
    Exudative\_Mascular\_Degeneration & Comorbidity & binary \\
    Pressure\_Ulcer\_of\_Skin\_with\_Necrosis\_ & Comorbidity & binary \\
    Fibrosis\_of\_Lung\_and\_Other\_Chronic\_Lu & Comorbidity & binary \\
    Seizure\_Disorders\_and\_Convulsions & Comorbidity & binary \\
    Unstable\_Angina\_and\_Other\_Acute\_Ische & Comorbidity & binary \\
    Major\_Organ\_Transplant\_or\_Replacement & Comorbidity & binary  \\
    Complications\_of\_Specified\_Implanted\_ & Comorbidity  & binary \\
    Proliferative\_Diabetic\_Retinopathy\_an & Comorbidity & binary \\
    Bone\_Joint\_Muscle\_Infections\_Necrosis & Comorbidity & binary \\
    Colorectal\_Bladder\_and\_Other\_Cancers & Comorbidity & binary \\
    Morbid\_Obesity & Comorbidity & binary \\
    Unspecified\_Renal\_Failure & Comorbidity & binary \\
    Inflammatory\_Bowel\_Disease & Comorbidity & binary \\
    Traumatic\_Amputations\_and\_Complications & Comorbidity & binary \\
    Major\_Depressive\_Bipolar\_and\_Paranoid & Comorbidity & binary \\
    Artificial\_Openings\_for\_Feeding\_or\_El & Comorbidity & binary \\
    Other\_Significant\_Endocrine\_and\_Metab & Comorbidity & binary \\
    Schizophrenia & Comorbidity & binary \\
    Parkinsons\_and\_Huntingtons\_Diseases & Comorbidity & binary \\
    Chronic\_Hepatitis & Comorbidity & binary \\
    gender & Demographics & binary \\
    Respiratory\_Arrest & Comorbidity & binary \\
    DAYCOV & Other & numeric \\
    Rheumatoid\_Arthritis\_and\_Inflammatory & Comorbidity & binary \\
    Ischemic\_or\_Unspecified\_Stroke & Comorbidity & binary \\
    Multiple\_Sclerosis & Comorbidity & binary \\
    Chronic\_Pancreatitis & Comorbidity & binary \\
    Respirator\_Dependence\_Tracheostomy\_St & Comorbidity & binary \\
    Dementia\_With\_Complications & Comorbidity & binary \\
    Cerebral\_Palsy & Comorbidity & binary \\
    Hemiplegia\_Hemiparesis & Comorbidity & binary \\
    Cerebral\_Hemorrhage & Comorbidity & binary \\
    Diabetes\_without\_Complication & Comorbidity & binary \\
    HIV\_AIDS & Comorbidity & binary \\
    Coma\_Brain\_Compression\_Anoxic\_Damage & Comorbidity & binary \\
    Quadriplegia & Comorbidity & binary \\
    Muscular\_Dystrophy & Comorbidity & binary \\
    Paraplegia & Comorbidity & binary \\
    Drug\_Alcohol\_Dependence & Comorbidity & binary \\
    Amyotrophic\_Lateral\_Sclerosis\_and\_Oth & Comorbidity & binary \\
    Monoplegia\_Other\_Paralytic\_Syndromes & Comorbidity & binary \\
    Spinal\_Cord\_Disorders\_Injuries & Comorbidity & binary \\
    Severe\_Skin\_Burn\_or\_Condition & Comorbidity & binary \\
    Drug\_Alcohol\_Psychosis & Comorbidity & binary \\
    Cystic\_Fibrosis & Comorbidity & binary \\
    Angina\_Pectoris & Comorbidity & binary \\
    Major\_Head\_Injury & Comorbidity & binary \\
    Diabetes\_with\_Acute\_Complications & Comorbidity & binary \\
    \hline
    \caption{Information on the variables described in Section \ref{subsec:data}.  This table is taken directly from \cite{murphree2021}. }
    \end{longtable}
    \normalsize
%

\section{MCMC Chain Details}\label{a:mcmc}
In this section, specific details on the Gibbs sampler in Section \ref{sec:mcmc} are provided.

\subsection{Drawing \texorpdfstring{$\mathbf{Z}$}{}}  Given an estimate for the regime vector $\boldsymbol\phi$ and the parameters $\boldsymbol\Theta$, one can sample the latent data for missing values, for discrete variables, and for observations that take values equal to a variable's censored bounds.  Consider an observation $\mathbf{y}_{t;i}$, and let $\mathcal{M}_{t;i}, \mathcal{B}_{t;i} \subseteq \{ 1, \dots, J\}$ be disjoint sets of indices such that
\[ y_{t;i,j} ~~\begin{cases}
\text{is missing} & \text{if $j \in \mathcal{M}_{t;i}$} \\
\text{is equal to $b_j$ or $c_j$} & \text{if $j \in \mathcal{B}_{t;i}$}
\end{cases} \]
First consider all $\mathbf{z}_{t;i, \mathcal{M}_{t;i}}$.  Since the parameters, regime assignments, and component assignments are assumed to be known, draw $\mathbf{z}_{t;i, \mathcal{M}_{t;i}}$ from
\begin{align*}
    &\hspace{-0.5cm}\mathbf{z}_{t;i, \mathcal{M}_{t;i}} \Big| \mathbf{z}_{t;i} \backslash \mathbf{z}_{t;i, \mathcal{M}_{t;i}}, \boldsymbol\theta^{(r)}, \boldsymbol\phi, \boldsymbol\gamma \sim \mathcal{N} \biggl((\mu^{(r)}_{\gamma_{t;i} })_{\mathcal{M}_{t;i} } + \\
    & (\Lambda^{(r)}_{\gamma_{t;i} })_{\mathcal{M}_{t;i},\mathcal{M}_{t;i} }^{-1}(\Lambda^{(r)}_{\gamma_{t;i}})_{\mathcal{M}_{t;i}, [J] \backslash \mathcal{M}_{t;i} }( \mathbf{z}_{t;i} \backslash \mathbf{z}_{t;i, \mathcal{M}_{t;i}} - (\mu^{(r)}_{\gamma_{t;i}})_{[J] \backslash \mathcal{M}_{t;i}} ), ~~ (\Lambda^{(r)}_{\gamma_{t;i}})_{\mathcal{M}_{t;i},\mathcal{M}_{t;i}}^{-1} \biggr).
\end{align*}

After drawing $\mathbf{z}_{t;i, \mathcal{M}_{t;i}}$, draw the latent variables associated with the binary data values.  For each individual element $\mathbf{z}_{t;i, j'}, j' \in \mathcal{D}$, draw from the univariate conditional distribution 
\begin{align}\label{eq:univariateconditional}
   &\hspace{-0.2cm}z_{t;i, j'} | \mathbf{z}_{t;i,[J] } \backslash z_{t;i, j'}, \boldsymbol\theta^{(r)}, \boldsymbol\phi, \boldsymbol\gamma \sim \\
   &\hspace{0.2cm} \mathcal{N} ((\mu^{(r)}_{\gamma_{t;i} })_{j'} + (\Lambda^{(r)}_{\gamma_{t;i}})_{j',j'}^{-1}(\Lambda^{(r)}_{\gamma_{t;i} })_{j', [J] \backslash \{j'\} }( \mathbf{z}_{t;i} \backslash  z_{t;i, j'} - (\mu^{(r)}_{\gamma_{t;i} })_{[J] \backslash \{j'\} } ), (\Lambda^{(r)}_{\gamma_{t;i}})_{j',j'}^{-1} ), \nonumber
\end{align} 
truncated to the set $\{ z_{t;i, j'} \geq 0 \}$ if $y_{t;i,j} = 1$, and $\{ z_{t;i, j'} < 0 \}$ otherwise.  

Lastly, consider the latent data associated with observations that achieve a boundary: $z_{t;i,j}$ where $j \in \mathcal{B}_{t;i}$.  For every such element $z_{t;i,j}$, also draw from the distribution in Equation \ref{eq:univariateconditional}, but instead truncated to be above or below the right or left boundary, depending on whether $y_{t;i,j} = b_j$ or $y_{t;i,j} = c_j$.

\subsection{Drawing \texorpdfstring{$\boldsymbol\Theta, P, w, v$}{}}\label{sec:drawTheta}

The parameter sets $\boldsymbol\theta^{(r)}$ are drawn using the Mixture Model described in Section \ref{subsec:dirichlet}.  When the component assignments $\boldsymbol\gamma$ are known, this prior is conjugate assuming MVN data and fixed graph structure $G$.  Although this setup allows for an unbounded number of regimes, to simulate this in practice choose the number of regimes $R$ to be well above the number of regimes one would expect to observe.

Calculating proposal matrices for the precision matrices in the parameter set $\boldsymbol\theta^{(r)}$, given a fixed graph structure $G$, is a non-trivial task.  Each $\Lambda$ matrix proposed must be from a random process that produces a matrix that matches the zero structure encoded in $G$, while also remaining positive definite.  This task is performed using the algorithm proposed by \cite{lenkoski2013}, where a sample $\Lambda$ is first taken from the full Wishart distribution, then iteratively updated until $\Lambda \in \mathcal{P}_G$ and $\Lambda \sim \mathcal{W}_G(D, \nu)$.  A key note about this process is that the resulting matrix is simulated from the $G$-Wishart distribution itself, rather than some other proposal distribution that produces a valid matrix.  This is necessary for the update to remain conjugate.

Redrawing the elements of the probability transition matrix $P$ is a conjugate update, since they are each assumed to follow Beta$(w,v)$ distributions.  Redrawing the hyperparameters $w, v$ is also a conjugate update, since they are assumed to follow Gamma$(a_c,b_c)$, $c \in \{ w, v\},$ distributions.

\subsection{Drawing \texorpdfstring{$\boldsymbol\phi$}{}}\label{sec:drawPhi}
To calculate good proposals for $\boldsymbol\phi$, we take a split-merge-swap approach.  When a merge is undertaken, two adjacent regimes are combined into a single regime; when a split is undertaken, a regime is split into two adjacent regimes.  In the third step of the MCMC chain (in which the regime vector is resampled), a merge or split action is chosen randomly, then regimes are relabeled across the model to ensure that they remain sequentially ordered.  After a merge or split is attempted, this algorithm iterates through every change-point and resamples the regime assignments for the days at the beginning and end of every regime (constrained in such a way that this swap action can neither add or remove a regime).

The above algorithm has a structure similar to the split-merge-shuffle algorithm found in \cite{martinez2014}.
However, unlike \cite{martinez2014}, a procedure is added to estimate the discrete probability distribution on possible points at which to split a regime into two separate regimes.  This procedure gives weights to split points that are proportional to the likelihood gain of fitting the model on the split regime vector.  This is in contrast to choosing a split point uniformly at random, which may slow the convergence of the MCMC chain in cases where the probability of a regime change is concentrated to a small group of sequential change-points.  To mitigate the computational burden of this split-finding algorithm, a stratified sampling of possible cutpoints is used when calculating the proposal distribution of a split, or when determining the return probability for a merge.  This sampling process approximates the likelihood value of every single possible cutpoint using a grid, which allows the sampler to find timepoints with a high probability of split quickly.   A full pseudocode of the algorithm is available in Section \ref{a:alg}.

Let $\Phi$ be the space of possible regime vectors given the maximum number of possible regimes $R$.  Whenever an update for $\boldsymbol\phi$ is proposed, new values are needed for the parameter collections $\boldsymbol\Theta$, since an update to $\boldsymbol\phi$ causes an update to each of these parameters' posterior distributions.  To improve mixing, instead of drawing from the distribution for $\boldsymbol\phi | \mathbf{Z}, \boldsymbol\Theta, \boldsymbol\gamma, P,  \mathbf{Y}$, one should draw from $\boldsymbol\phi, \boldsymbol\Theta | \mathbf{Z}, \boldsymbol\gamma, P, \mathbf{Y}$.  Although a conjugate prior has been chosen for the collection $\boldsymbol\Theta$, when $\boldsymbol\phi$ changes the underlying \textit{data assignment} for these parameters also changes, which results in lingering terms in the final MH ratio.  To motivate this, consider a proposal to merge regimes $r$ and $r+1$ into regime $r$, denoted by moving from a regime vector $\boldsymbol\phi$ to $\tilde{\boldsymbol\phi}$.  The posterior $G$-Wishart distribution (given the regime vectors) is used as the proposal distribution for the model parameters.  Then, the Metropolis-Hastings ratio (assuming for the time being that the data all belong to the same mixture component) would be
\begin{align}\label{eq:wrongMH}
      \frac{ \left(\prod_{r=1}^M p(\tilde{\mathbf{z}}^{(r)} | \tilde{\theta}^{(r)})\right) \prod_{r=1}^R p(\tilde{\theta}^{(r)} | \tilde{\boldsymbol\phi} ) \left(\prod_{r=1}^{M+1} p( \theta^{(r)} |\mathbf{z}^{(r)} )\right) \prod_{r=M+1}^R p(\theta^{(r)}) p(\tilde{\boldsymbol\phi} | P) \mathfrak{r}(\boldsymbol\phi) }{\left( \prod_{r=1}^{M+1}  p(\mathbf{z}^{(r)} | \theta^{(r)}) \right) \prod_{r=1}^R p(\theta^{(r)} | \boldsymbol\phi ) \left(\prod_{r=1}^M p( \tilde{\theta}^{(r)} |\tilde{\mathbf{z}}^{(r)} )\right) \prod_{r=M}^R p(\tilde{\theta}^{(r)}) p(\boldsymbol\phi | P)\mathfrak{r}(\tilde{\boldsymbol\phi}) } 
\end{align} 
where $M$ is the number of observed regimes before updating, $\mathbf{z}^{(r)}$ \& $ \tilde{\mathbf{z}}^{(r)}$ refer to the data associated with regime $r$ according to $\boldsymbol\phi$ and $\tilde{\boldsymbol\phi},$ respectively, and $\mathfrak{r}(\cdot)$ is the sampling distribution on $\boldsymbol\phi$.  Equation \eqref{eq:wrongMH} reduces to
\begin{align}\label{eq:reducedMH}
      \frac{I(D + \mathbf{z}^{(r)} (\mathbf{z}^{(r)})', \nu + n^{(r)}) I(D + \mathbf{z}^{(r+1)} (\mathbf{z}^{(r+1)})', \nu + n^{(r+1)})}{I(D +\tilde{\mathbf{z}}^{(r)} (\tilde{\mathbf{z}}^{(r)})', \nu + \tilde{n}^{(r)})I(D, \nu)} \cdot \frac{p(\tilde{\boldsymbol\phi} | P) \mathfrak{r}(\boldsymbol\phi) }{p(\boldsymbol\phi | P)\mathfrak{r}(\tilde{\boldsymbol\phi})  },
\end{align}
where $n^{(r)}$ is the number of observations in regime $r$ according to state vector $\boldsymbol\phi$ and $\tilde{n}^{(r)}$ is the number of observations in regime $r$ according to the next state vector $\tilde{\boldsymbol\phi}$.  A similar ratio occurs when performing a split action instead of a merge.  This ratio in Equation \eqref{eq:reducedMH} represents a potential drawback of this approach.  Whenever the underlying data of regime $r$ changes via a merge or split on that regime, it leaves a ratio of $G$-Wishart normalizing constants that would cancel (due to conjugacy) if the data assignments were unchanged.

A merge or split on $\boldsymbol\phi$ requires a reassignment of the components in $\boldsymbol\gamma$ associated with the data values that change their regime.  For simplicity, the component assignments from a previous regime vector are carried over to the new regime vector.  Under the Gibbs framework, reassignments can be made by resampling the component assignments shortly after a given merge/split is accepted or rejected.   

\subsection{Drawing \texorpdfstring{$\boldsymbol\gamma$}{}}
Due to the truncated stick breaking process described in Section \ref{subsec:dirichlet}, the vector $\boldsymbol\gamma$ can updated by a straightforward Gibbs sampler.  However, this approach has known mixing issues.  Thus, we update $\boldsymbol\gamma$ according to a modified split-merge algorithm similar to the ones proposed in \cite{storlie2018} and \cite{kim2006}.  Full pseudocode of this algorithm is available in Section \ref{a:alg}.

\subsection{Drawing Hyperparameters \texorpdfstring{$m$, $\lambda$}{} }\label{subsec:hyperparams}

Due to an absence of prior knowledge on the hyperparameters $\lambda$ and $m$, we embrace a Bayesian hierarchical framework to learn these values as well.  To leverage the computational simplicity of conjugate updates, the following normal and gamma priors are chosen:
\[ m \sim \mathcal{N}_J (0, \mathcal{I}_J) ~~;~~ \lambda \sim \Gamma(c,d),\]
where $\mathcal{I}_J$ is a $J$-dimensional identity matrix.
Given the observed regime model parameters $\{ \boldsymbol\theta^{(1)}, \dots, \boldsymbol\theta^{(R)}\}$, this leads to the posteriors
\begin{align*}
    &m \Big| \{ \boldsymbol\theta^{(1)}, \dots, \boldsymbol\theta^{(R)} \}, \lambda \sim \mathcal{N}_J  \biggl( \left( \mathcal{I}_J + \sum_{r=1}^R \sum_{i=1}^Q \lambda * \Lambda^{(r)}_i \right)^{-1} \left( \sum_{r=1}^R \sum_{i=1}^Q \lambda \Lambda^{(r)}_i \mu^{(r)}_i \right), \\
    &\hspace{5cm}\left( \mathcal{I}_J + \sum_{r=1}^R \sum_{i=1}^Q \lambda * \Lambda^{(r)}_i \right)^{-1} \biggr),\\
    &\lambda \Big| \{ \boldsymbol\theta^{(1)}, \dots, \boldsymbol\theta^{(R)} \}, m \sim \Gamma \left( c + \frac{RJQ}{2}, \frac{1}{2} \sum_{r=1}^R \sum_{i=1}^Q \tr \left( (\mu^{(r)}_i - m)(\mu^{(r)}_i - m)' \Lambda^{(r)}_i \right) + d \right).
\end{align*}

\subsection{Drawing Hyperparameter \texorpdfstring{$D$}{} }\label{subsec:hyperparams2}

One potential solution to this issue is to put a Wishart prior on the scale matrix $D$ from \eqref{eq:NGW}, $D \sim \mathcal{W}\left(\eta, F \right)$ for degrees of freedom $\eta$ and scale matrix $F$.  Since the distributions on the $\Lambda^{(r)}$'s are $G$-Wishart distributions, this update is generally not conjugate.  We choose a sampling distribution on $D$ so that this update will be conjugate whenever $G$ is a full graph,
\[ r\left( D | \boldsymbol\Theta, \boldsymbol\phi, F, \eta, \nu\right)  \sim \mathcal{W}\left(RQ(\nu + J - 1) + \eta, F + \sum_{r=1}^R \sum_{q=1}^Q \Lambda_q^{(r)} \right). \]
The authors tested this approach and concluded that it is computationally expensive and does not significantly impact the FPR issue.  Thus, the final version of the algorithm fixes the scale matrix of the NG-W at the observed empirical center (given the initial degrees of freedom).  

As a direction of future work, this issue may be solved by replacing the NG-W by a two-component mixture, where one component has a prior distribution heavily concentrated at the empirical center and one component has a much more diffuse prior distribution. 

\subsection{Drawing \texorpdfstring{$G$}{}}
To propose a new graph structure $\tilde{G}$ given a current graph structure $G$, only graphs within a neighborhood of $G$ are considered.  That is, a graph proposal is simulated according to the following procedure: select an edge from the space of possible edges, then either remove that edge if it already exists in $G$, or add that edge if it is currently absent.  While this proposal sampling technique is relatively simple, accepting or rejecting this proposal is one of the major challenges of the Bayes Watch framework.  Since the graph structure $G$ determines the zero structure of the precision matrix $\Lambda$, updating $G$ by adding or subtracting an edge equates to a change in the number of free parameters in the current $\Lambda$ matrices, which demands a transdimensional joint update of the $G$ and $\Lambda$ parameters, where the \textit{number} of parameters changes in addition to the values themselves.

Classically, the transdimensional problem of adding and removing edges in a GGM is solved using a Reversible Jump (RJ) MCMC framework.  The RJ uses a Jacobian for the change of parameters that happens when one adds or drops dimensions \citep{green1995}.  The calculation of this Jacobian can often be nontrivial, and requires careful thought from the researcher as to how to structure their MCMC chain.  In \cite{roverato2002}, a method is proposed that uses the RJ framework to search the joint parameter space of precision matrices and graph structures, $(\Lambda, G)$.  This method achieves a trivial RJ Jacobian by proposing samples on the Cholesky decomposition on $\Lambda$.  This approach is explored further in \cite{dobra2011}, who find that when using a $G$-Wishart prior, a ratio of $G$-Wishart normalizing constants appear in the final MH ratio.  For an outline of \cite{dobra2011}'s algorithm, including an explanation as to why this ratio of normalizing constants appears, see Section \ref{a:rjmcmc}.

The Bayes Watch framework uses the Double Reversible Jump (DRJ) algorithm of \cite{lenkoski2013} to efficiently resample the graph structure.  The DRJ combines an exact sampler of the $G$-Wishart distribution, the RJ procedure in \cite{dobra2011}, and the exchange algorithm of \cite{wang2012}, to create a MCMC procedure that sidesteps the need to calculate the $G$-Wishart normalizing constants, since this computation is the major computational bottleneck.  The algorithm in \cite{wang2012} involves creating an extended parameter space and performing an ``exchange" step \citep{murray2006}; this results in a joint chain in which clever draws from the prior distribution cause the $G$-Wishart normalizing constants to drop out, yet leave the marginals of the original joint parameter distribution, $(\Lambda, G)$, unchanged.  We provide a slight update to the DRJ algorithm to also simultaneously draw from the mean parameter $\mu$ (see Section \ref{a:rjmcmc}).  

\subsection{Comments on the Computational Challenges of the G-Wishart}  

The $G$-Wishart normalizing constant is famously difficult to compute in practice; methods for computing this constant, or avoiding its calculation entirely, are popular topics in modern research.  While a direct calculation of the $G$-Wishart normalizing constant, $I_G(\nu, D)$, does exist, it is computationally inefficient, and researchers tend to favor approximation when it absolutely must be calculated \citep{uhler2018}.  Common approaches to this approximation include additional MCMC sampling \citep{atay-kayis2005} and Laplace approximations \citep{lenkoski2011}.  The MCMC algorithm has the drawback of taking longer to converge for larger values of the degrees of freedom parameter $\nu$, while the Laplace approximation becomes less accurate for smaller values of $\nu$.  Fortunately, these two drawbacks complement one another, so the MCMC technique is used to estimate the prior normalizing constant, while the Laplace approximation is used to estimate the posterior.  This aligns with methods suggested in \cite{lenkoski2011} and \cite{mohammadi2021}.  For further reading on this problem, and the creative solutions available in the literature, the authors would additionally suggest the papers by \cite{mohammadi2015, mohammadi2021, wang2012, lenkoski2013}. 

Through many experiments, the authors found that the code for the full graph model (which does not require the calculation of a $G$-Wishart normalizing constant) occasionally takes around the same amount of time as the model for a general sparse graph.  The slow speeds were due to the slow convergence of the Iterative Proportion Scaling algorithm developed by \cite{speed1986} that is used in the Laplace approximation from \cite{lenkoski2011}.  For many data applications with moderate data sizes ($p < 100, n < 1\text{e}6$), the authors would still recommend the sparse model with the Laplace approximation, since (as discussed in Section \ref{sec:sim}) mixing issues with the full model make the sparse model much preferable.  When speed is a major concern, it is possible to avoid the need to calculate the Laplace approximation with the sparse model by only considering decomposable graphs \citep{letac2007}.  As proven in \cite{roverato2002}, the normalizing constant of a $G$-Wishart distribution can be factorized according to $G$'s prime components and their separators.  When these prime components are all complete graphs, which is the case for decomposable graphs, the $G$-Wishart normalizing term can be written as a ratio of Wishart normalizing terms.  The authors found that the decomposable restriction on the graph structure yielded significantly faster runtimes.

\section{Double Reversible Jump Metropolis-Hastings}\label{a:rjmcmc}

The aim of this supplement is to provide an explanation of the Double Reversible Jump algorithm for researchers interested in better understanding it.  Recall from the main paper that the primary aim of this approach is to resample the graph structure without having to calculate the untenable $G$-Wishart normalizing term in the Metropolis-Hastings ratio.  At the end of this supplement, we also discuss our slight update to this algorithm that extends the process for the NG-W.  

We will closely follow the setup of \cite{lenkoski2013}, which is combination of the reversible jump algorithm proposed by \cite{dobra2011} and the exchange algorithm proposed by \cite{murray2006}.  We develop a slight modification to this algorithm that allows for draws from the joint parameter space $(\mu, \Lambda, G)$, rather than just $(\Lambda, G)$.  This modification has the added benefit of simplifying the Metropolis-Hastings ratio.  

Start by considering the upper triangular Cholesky decomposition matrix $\boldsymbol\Psi$ such that $\boldsymbol\Psi'\boldsymbol\Psi = \Lambda$.  Working in the space of the elements $\boldsymbol\Psi$ has the advantage that the free elements of $\boldsymbol\Psi$ can be used to ``complete" the rest of the matrix, given a fixed graph structure $G$, by way of a simple formula.  Fixing $G$, if we knew every element $\Psi_{ij}$ for $(i,j) \in G$, then all elements $\Psi_{ij}$ for $(i,j) \notin G$ can be derived.  This is done by taking every $(i,j) \notin G, 2 \leq i < j$, and setting \begin{equation}\label{eq:cholesky}
    \Psi_{ij} := - \frac{1}{\Psi_{ii}} \sum_{l=1}^i \Psi_{li}\Psi_{lj},
\end{equation}
and $\Psi_{1, j} := 0$ for every $j \geq 2$ when $(1,j) \notin G.$  This completion process is due to \cite{roverato2002}, who makes the point that the process can be parallelized for each row.  An efficient implementation of this process using the C++ language is available within the `bayesWatch' package.  Thus, we can draw the free elements of the matrix $\boldsymbol\Psi$ that correspond to $(i,j) \in G$ randomly, then complete $\boldsymbol\Psi$ efficiently using Equation \ref{eq:cholesky} so that the resulting $\Lambda$ matrix has the correct zero structure and is guaranteed to be positive definite \citep{roverato2002}. 

The algorithm by \cite{dobra2011} begins by proposing a new graph by either removing or adding a single edge.  Without loss of generality, suppose the new graph is $G^e$, which is assumed to be equal $G$ save for the addition of a single edge $e = (l,m)$.  \cite{dobra2011} then propose a new $\tilde{\Lambda}$ matrix by constructing a new Cholesky decomposition matrix $\tilde{\boldsymbol\Psi}$.  This is done by first letting $\tilde{\Psi}_{ij} := \Psi_{ij}$ for every $i=j$ or $(i,j) \in G$, then drawing $\tilde{\Psi}_{lm} \sim \mathcal{N}(\Psi_{lm}, \sigma_g^2)$, and finally filling in the remaining elements of $\tilde{\boldsymbol\Psi}$ according to the process described by Equation \ref{eq:cholesky}, using the new proposed graph structure $G^e$.  To accept this new element, we use the following Metropolis-Hastings ratio,
\begin{equation}
    \frac{f(\mathbf{Z} | \tilde{\Lambda}) p(\tilde{\Lambda} | G^e) p(G^e)}{f(\mathbf{Z} | \Lambda) p( \Lambda | G ) p(G)} * \frac{J(\tilde{\Lambda} \to \tilde{\boldsymbol\Psi} )}{J(\Lambda \to \boldsymbol\Psi)} * \frac{\sigma_g \sqrt{2\pi} J(\boldsymbol\Psi \to \tilde{\boldsymbol\Psi})}{\exp(-(\tilde{\Psi}_{lm} - \Psi_{lm})^2 / (2\sigma^2_g))},
\end{equation}
where $J(\cdot)$ is the Jacobian operator.  Notice that in the above calculation, the transition probability to the new state comes in the form of a normal PDF of $\mathcal{N}(\Psi_{lm}, \sigma_g^2)$, while the return probability is 1, since the process described in Equation \ref{eq:cholesky} is deterministic for a removal of an edge in the graph space.  The term $J(\boldsymbol\Psi \to \tilde{\boldsymbol\Psi})$ is the transdimensional Jacobian term described by \cite{green1995}, while the remaining Jacobian terms are for the change of variables that occurs between the precision matrix and its Cholesky decomposition.  \cite{roverato2002} provides the explicit form for each of these Jacobian terms, and the ratio thereby reduces to
\[\alpha_{DL} = \sigma_g \sqrt{2\pi} \Psi_{ll} \frac{I_{G}(\nu, D)}{I_{G^e}(\nu, D)} * \exp \left \{ -\frac{1}{2} 
\left\langle \tilde{\Lambda} - \Lambda, D + \sum_{j=1}^n Z_j Z_j' \right\rangle + \frac{ (\tilde{\Psi}_{lm} - \Psi_{lm})^2}{2 \sigma_g^2} \right\}, \] 
where $\langle A,B \rangle = \tr(A'B)$.  The determinant terms from the Wishart distributions drop out in the above ratio because $|\tilde{\Lambda}| = |\Lambda|$, which is apparent since the matrices $\tilde{\boldsymbol\Psi}$ and $\boldsymbol\Psi$ have the same diagonal elements.

The major drawback of the \cite{dobra2011} algorithm is the lingering ratio of $G$-Wishart normalizing constants: $\frac{I_{G}(\nu, D)}{I_{G^e}(\nu, D)}$.  Although there does exist an exact solution to this ratio \citep{uhler2018}, for practical purposes it must be estimated using a Laplace approximation \citep{lenkoski2011} or by an additional MCMC algorithm \citep{atay-kayis2005}.  To circumvent this issue, \cite{lenkoski2013} proposes a Double Reversible Jump algorithm, which is a variation on the Double MH exchange algorithm of \cite{wang2012} combined with the above reversible jump algorithm by \cite{dobra2011}.  

The exchange algorithm, which was first introduced by \cite{murray2006}, proposes a clever proposal distribution for a new graph structure $G$ that causes the normalizing constants $I_G(\cdot)$ drop out.  Suppose that $G^d$ is the current graph structure at the $d$th step of the MCMC process. The central idea of the exchange algorithm is that the \textit{proposal} of $G^d$ to $G^{d+1}$ involves \textit{exchanging} $G^d$ \textit{for} $G^{d+1}$ using an auxiliary variable.  In this direction, rather than consider a move starting at $(\Lambda^d, G^d)$, consider starting at the following vector with two added auxiliary variables:
\begin{equation}\label{eq:extendedVars}
    (\Lambda^d, G^d, \hat{\Lambda}^d, \hat{G}^d ).
\end{equation}
We will update this parameter vector by first performing a conjugate update on $\Lambda^d$.  Since this proposal can be done efficiently and is accepted always, this parameter is updated first, separate of the other parameters. New values for the remaining parameters are proposed using the following two moves, as first described by \cite{wang2012}:
\begin{enumerate}
    \item \label{itm:updateA} Update $\hat{\Lambda}^d, \hat{G}^d$;
    \item \label{itm:updateG} Update $G^d$ (this involves updates to the precision matrices);
\end{enumerate}

Assuming that $\Lambda^d$ has already been updated, moves \ref{itm:updateA} and \ref{itm:updateG} are done using the reversible jump framework of \cite{dobra2011} and the variation on \cite{murray2006}'s exchange algorithm given by \cite{wang2012}.

We perform move \ref{itm:updateA} by updating $\hat{\Lambda}^d$ and $\hat{G}^d$ given the current (non-auxiliary) graph structure $G^d$. We draw a new $\hat{\Lambda}^{d + 1/2}$ from the prior distribution $\mathcal{W}(\nu, D)$ and a new $\hat{G}^{d + 1/2}$ from a distribution that has the restriction that there can be at most one edge different from $G^d$.  

The the crucial idea of this approach is in its proposal scheme for move \ref{itm:updateG}.  Rather than draw a new $G^{d+1}$ randomly, we simply exchange $G^d$ and $\hat{G}^{d+1/2}$ in the full parameter vector.  Thus, $G^{d+1} := \hat{G}^{d+1/2}$ and $\hat{G}^{d+1}:=G$.  Note that this is a symmetric proposal, and that this ``exchange'' step will therefore not show up in the final MH ratio.  Since this step causes $\Lambda^{d}$ to no longer align with $G^{d+1}$, and $\hat{\Lambda}^{d + 1/2}$ to no longer align with $\hat{G}^{d+1}$, we perform the reversible jump update described above twice, perturbing each matrix to align with the correct graph structure, which gives the final vector 
\[ (\Lambda^{d+1}, G^{d+1}, \hat{\Lambda}^{d+1}, \hat{G}^{d+1} ). \]
This final alignment uses the reversible jump algorithm of \cite{dobra2011} and replaces the need to calculate the conditional distributions of the precision matrices with their changed elements integrated out, as described by the original Double MH algorithm of \cite{wang2012}.  \cite{wang2012} note that since the augmented joint distribution $p(\Lambda, G, \hat{\Lambda}, \hat{G} | \mathbf{Z})$ is equal to $p(\Lambda, G | \mathbf{Z}) q(\hat{G} | \Lambda, G, \mathbf{Z}) p(\hat{\Lambda} | \hat{G})$ where $p(\Lambda, G | \mathbf{Z})$ is the target distribution, $q(\hat{G} | \Lambda, G, \mathbf{Z})$ is a proposal distribution such that $\hat{G}$ differs from $G$ by only one edge, and $ p(\hat{\Lambda} | \hat{G})$ is $\hat{G}$-Wishart, using the extended joint distribution leaves the original posterior for $(\Lambda, G)$ marginally unaffected.

Following this process, let us return to the initial example of moving to the new graph $G^{d+1} := G^e$, which is assumed to be equal to a current graph $G$ save for the addition of a single edge $e = (l,m)$.  Let $\boldsymbol\Psi^{d}$ be the upper triangular matrix from the Cholesky decomposition of matrix $\Lambda^d$.  Further, let $\hat{\Lambda}_{G}^d$ be the matrix that, when completed using the process of \cite{roverato2002}, using the matrix $G$, gives the matrix $\hat{\Lambda}^d$.  Further, let $\hat{\boldsymbol\Psi}_G^{d}$ be its corresponding upper triangular matrix from its Cholesky decomposition.  The full MH ratio would then be 
\begin{align*}
    \frac{f(\mathbf{Z} | \Lambda^{d+1})  p(\Lambda^{d+1} | G^e)p(G^e) }{f(\mathbf{Z} | \Lambda^d) p( \Lambda^d | G ) p(G)  }  *& \frac{ J(\Lambda^{d+1} \to \boldsymbol\Psi^{d+1} )}{J( \Lambda^{d+1/2} \to \boldsymbol\Psi^{d+1/2} )} \frac{ J( \hat{\Lambda}_{G^e}^{d} \to \hat{\boldsymbol\Psi}_{G^e}^{d} ) }{ J(\hat{\Lambda}^{d+1/2} \to \hat{\boldsymbol\Psi}^{d+1/2} ) }   * \\
    &~~~\frac{\sigma_g \sqrt{2\pi} J(\boldsymbol\Psi^d \to \tilde{\boldsymbol\Psi}^{d+1} ) J(  \hat{\boldsymbol\Psi}^d \to \hat{\boldsymbol\Psi}^{d+1} )  }{\exp\left(\frac{-(\Psi^{d+1}_{lm} - \Psi^d_{lm})^2}{ 2\sigma^2_g} + \frac{( \hat{\Psi}^{d}_{lm} - (\hat{\Psi}_{G^e}^d)_{lm} )^2}{ 2\sigma^2_g}  \right)} * \frac{f_{\mathcal{W}_{G^e}(\nu, D)}( \hat{\Lambda}_{G}^{d} ) }{ f_{\mathcal{W}_{G}(\nu, D)}(\hat{\Lambda}^{d+1/2}) }
\end{align*} 

The name, Double Reversible Jump, is attributed to the dual MH steps that occur (moves \ref{itm:updateA} and \ref{itm:updateG}) and the use of \cite{dobra2011}'s reversible jump algorithm to align the swapped graph structures with their precision matrices.  The final MH ratio can be simplfied to get
\begin{equation} \label{eq:alphaDRJ}
\alpha_{DRJ} = \frac{\exp \left \{ -\frac{1}{2} \left\langle \tilde{\Lambda} - \Lambda, D + \sum_{j=1}^n Z_j Z_j' \right\rangle  + \frac{-(\Psi^{d+1}_{lm} - \Psi^d_{lm})^2}{ 2\sigma^2_g} \right\}}{ \exp \left \{ -\frac{1}{2} \left\langle \hat{\Lambda}^{d+1/2} - \hat{\Lambda}_{G}^{d} , D \right\rangle +  \frac{( \hat{\Psi}^{d}_{lm} - (\hat{\Psi}_{G^e}^d)_{lm} )^2}{ 2\sigma^2_g} \right\} } * \frac{\Psi_{ll}^d}{(\hat{\Psi}_{G^e}^d)_{ll}}.
\end{equation}

The addition to the Double Reversible Jump algorithm comes in the form of an update to the mean vector $\mu$.  Due to the choice of the N-GW distribution prior on $(\mu, \Lambda)$, we have conjugate updates for $(\mu, \Lambda)$ whenever the graph structure $G$ is fixed.  Thus, we can redraw $\mu$ along with $\Lambda$ during the initial conjugate update of $\Lambda$, which is an update that we will accept with probability 1.  Since we are interested in the extended parameter space that includes the mean vector, we must update the kernel of the above Metropolis-Hastings ratio slightly to get
\[
\frac{f(\mathbf{Z} | \tilde{\mu}, \tilde{\Lambda}) p(\tilde{\mu} | \tilde{\Lambda})  p(\tilde{\Lambda} | G^e)p(G^e) }{f(\mathbf{Z} |\mu,  \Lambda) p(\mu| \Lambda) p( \Lambda | G ) p(G)}.
\]
To handle this, we also sample a second new $\mu$ vector from the posterior, given the precision matrix $\lambda^{d+1}$, which is calculated after move \ref{itm:updateG}.  Since the proposal distribution for $\mu$ is the conjugate posterior, the MH ratio reduces back to $\alpha_{DRJ}$.

\section{Merge-Split-Swap Algorithms}\label{a:alg}
\begin{algorithm} \label{alg:splitmerge}
 \KwIn{$\boldsymbol\Theta$, parameter vectors for each regime \break $\mathbf{Z}$, the latent data \break $n$, the number of timepoints \break
 $P$, current estimate of transition probabilities \break
 $\boldsymbol\phi$, current state vector}
 \KwOut{$\tilde{\boldsymbol\phi}$, updated}
 $\mathsf{M} \leftarrow $ number of regimes in $\phi$\;
$\mathsf{perform\_split} \leftarrow \text{Binomial}(0.5)$\;
\eIf(\commentSt*[f]{Split}){$\mathsf{perform\_split}$}{
        $t \leftarrow$ selected uniformly from $\{1,\dots, M\}$\;
		Calculate distribution of split points on regime $t$, $\mathcal{P}_{\star}$, using \textbf{Algorithm \ref{alg:phiproposal}}\;
		$r_t \leftarrow$ number of timepoints in regime $t$\;
		$\mathsf{lower}_{t+1} \leftarrow$ split point $i \in \{1, \dots, r_t - 1 \}$ simulated using $\mathcal{P}_{\star}$\;
		$\alpha \leftarrow 1 / \mathcal{P}_{\star}(\mathsf{lower}_{t+1})$\;
		\For{$j \leftarrow 1, \dots, M$}{ 
		    $\hat{\phi}_j \leftarrow \begin{cases}
    \phi_{j} & \text{if } j \leq i \\
    \phi_{j} + 1 & \text{if } j > i
    \end{cases}$~\;}
  	 }(\commentSt*[f]{Merge}){
      	 $t \leftarrow$ selected uniformly from $\{1,\dots, M - 1\}$\;
  	 Calculate distribution of split points for the union of regimes $t$ and $t+1$, $\mathcal{P}_{\star}$, using \textbf{Algorithm \ref{alg:phiproposal}}\;
		$\alpha \leftarrow \mathcal{P}_{\star}(t)$\;
		\For{$j \leftarrow 1, \dots, M$}{
		$\hat{\phi}_j \leftarrow \begin{cases}
    \phi_{j} & \text{if } j \leq i \\
    \phi_{j} - 1 & \text{if } j > i
    \end{cases}$~\;
    }
  	 }
Perform Metropolis-Hastings updating step on $\boldsymbol\Theta, \phi, P, \boldsymbol\gamma$ using $\alpha$ in the ratio as the sampling probability of $\hat{\boldsymbol\phi}$ and let $\tilde{\boldsymbol\phi}$ be the accepted value\;
 \For(\commentSt*[f]{Swap}){$i \leftarrow 2 \dots n-1$}{
 $r_t \leftarrow$ number of timepoints in regime $t$\;
 \If{$\tilde{\phi}_{i-1}$ not equal $\tilde{\phi}_{i+1} \textbf{ \upshape and } r_t > 1 $}{
    $\hat{\boldsymbol\gamma}^i \leftarrow$ conjugate update of component assignments assuming $\tilde{\phi}_{i}:=\tilde{\phi}_{i-1}.$\;
    $\dot{\boldsymbol\gamma}^i \leftarrow$ conjugate update of component assignments assuming $\tilde{\phi}_{i}:=\tilde{\phi}_{i+1}.$
    $\tilde{\phi}_{i} \leftarrow \begin{cases}
    \tilde{\phi}_{i-1} & \text{w.p. }\propto P_{i-1,i-1} * f(\mathbf{z}_i | \boldsymbol\theta^{\tilde{\phi}_{i-1}}, \hat{\boldsymbol\gamma}^i) \\
    \tilde{\phi}_{i+1} & \text{w.p. }\propto (1 - P_{i-1,i-1}) * f(\mathbf{z}_i | \boldsymbol\theta^{\tilde{\phi}_{i+1}}, \dot{\boldsymbol\gamma}^i) \\
    \end{cases}$~\;
 }
 }
 \caption{Split-Merge-Swap sampling for $\boldsymbol\phi$.}
\end{algorithm}

\begin{algorithm} \label{alg:phiproposal}
 \KwIn{$\mathsf{\boldsymbol\theta}^t$, set of parameters for regime $t$  \break
 $\mathsf{lower}_t, \mathsf{upper}_t \leftarrow$ the minimum and maximum timepoints of regime $t$ \break
 $\{\mathbf{z}^{\mathsf{lower}_t}, \mathbf{z}^{\mathsf{lower}_t+1}, \dots, \mathbf{z}^{\mathsf{upper}_t}\}$, data in regime $t$}
 \KwOut{$P_\star$, probability mass function of possible split points}
$f_\mathbf{Z}(\theta)\leftarrow$ multivariate normal likelihood function of $\theta$, given data $\mathbf{Y}$\;
 $r_t \leftarrow$ number of timepoints in regime $t$\;
 $k_t \leftarrow \lfloor r_t / 10 \rfloor$\;
 Simulate $\boldsymbol\theta^{\mathsf{lower}}$ using $\{\mathbf{z}^{\mathsf{lower}_t}\}$\;
 Simulate $\boldsymbol\theta^{\mathsf{upper}}$ using $\{\mathbf{z}^{\mathsf{lower}_t + 1},\dots, \mathbf{z}^{\mathsf{upper}_t}\}$\;
$p_{q_1} \leftarrow \phi_{\{\mathbf{z}^{\mathsf{lower}_t}\}}(\boldsymbol\theta^{\mathsf{lower}} ) * \phi_{\{\mathbf{z}^{\mathsf{lower}_t + 1},\dots, \mathbf{z}^{\mathsf{upper}_t}\} }(\boldsymbol\theta^{\mathsf{upper}} )$\;
\For{$j \leftarrow 2, \dots, k_t$}{
    Sample $q_j$ from $\{\mathsf{lower}_t+(j-1)*10+1, \dots, \mathsf{upper}_t+\min\{r_t, j*10\} \}$\;
    Simulate $\theta_{\mathsf{lower}}$ using $\{\mathbf{z}^{\mathsf{lower}_t}, \dots, \mathbf{z}^{q_j-1}\}$\;
     Simulate $\theta_{\mathsf{upper}}$ using $\{\mathbf{z}^{q_j},\dots, \mathbf{z}^{\mathsf{upper}_t}\}$\;
    $p_{q_j} \leftarrow f_{\{\mathbf{z}^{\mathsf{lower}_t}, \dots, \mathbf{z}^{q_j -1 }\} }(\boldsymbol\theta^{\mathsf{lower}} ) * f_{\{\mathbf{z}^{q_j},\dots, \mathbf{z}^{\mathsf{upper}_t}\} }(\boldsymbol\theta^{\mathsf{upper}} )$\;
    Linearly interpolate values $p_{\mathsf{lower}_t + (j-1)*10 + 1}, \dots, p_{\mathsf{lower}_t + j*10 - 1}$ using $p_{\mathsf{lower}_t + (j-1)*10}$ and $p_{\mathsf{lower}_t + j*10}$
}
$P_\star \leftarrow \{p_{\mathsf{lower}_t}, p_{\mathsf{lower}_t + 1}, \dots, p_{\mathsf{upper}_t}\}$\;
 \caption{Estimating split point distribution $P_\star$.}
\end{algorithm}

\begin{algorithm} \label{alg:splitmerge_comps}
 \KwIn{$Z$, the latent data \break
 $\Theta$, the parameter sets \break
 $\Pi^r$, component probabilities for regime $r$ \break
 $N_G$, number of Gibbs sweeps for the launch vector \break
 $\boldsymbol\gamma$, current component assignments}
 \KwOut{$\boldsymbol\gamma$, updated}
 $\boldsymbol\Gamma^r \leftarrow$ elements of $\boldsymbol\gamma$ associated with regime $r$\;
 $l^r \leftarrow$ length of the vector $\boldsymbol\Gamma^r$\;
 $\mathsf{a},\mathsf{b} \leftarrow $ two unique elements of $\{1,\dots, l^r\}$\;
     $\mathsf{log\_prob\_forward}, \mathsf{log\_prob\_backwards} \leftarrow 0$\;
\eIf(\commentSt*[f]{Split}){$\boldsymbol\Gamma^r_\mathsf{a} == \boldsymbol\Gamma^r_\mathsf{b}$}{
       $\mathsf{launch\_vector} \leftarrow$ subset of $\boldsymbol\Gamma^r$ containing only values equal to $\boldsymbol\Gamma^r_\mathsf{a}$ and $\boldsymbol\Gamma^r_\mathsf{b}$\;
       Set all elements of $\mathsf{launch\_vector}$ that are equal to $\boldsymbol\Gamma^r_\mathsf{b}$ to $ \max\{\boldsymbol\Gamma^r\} + 1$\;
       $Z^\mathsf{launch} \leftarrow$ data values associated with elements of $\mathsf{launch\_vector}$\;
       $l^\mathsf{launch} \leftarrow$ number of elements in $\mathsf{launch\_vector}$\;
       \For{$j \leftarrow 1 \dots l^\mathsf{launch}$}{
        $\mathsf{closest\_value} \leftarrow \argmin_{x\in \{\mathsf{a}, \mathsf{b}\}} ||Z^r_j - Z^r_x||_2$\;
        $\mathsf{launch\_vector}_j \leftarrow  \boldsymbol\Gamma_\mathsf{closest\_value}$\;
       }
       $\mathsf{count} \leftarrow 1$\;
       \While{$\mathsf{count} \leq N_G$}{
       $\mathsf{count} \pluseq 1$\;
        \For{$j \leftarrow 1 \dots l^\mathsf{launch}$}{
        $\mathsf{launch\_vector}_j \leftarrow \begin{cases}
            \boldsymbol\Gamma^r_\mathsf{a} & \text{w.p. } \propto \pi^r_{\boldsymbol\Gamma^r_\mathsf{a}} * f(Z^\mathsf{launch}_{j} | \theta^r_{\boldsymbol\Gamma^r_\mathsf{a}} ) \\
            \boldsymbol\Gamma^r_\mathsf{b} & \text{w.p. } \propto  \pi^r_{\boldsymbol\Gamma^r_\mathsf{b}} * f(Z^\mathsf{launch}_{j} | \theta^r_{\boldsymbol\Gamma^r_\mathsf{b}} ) \\
            \end{cases} $\;
           }
       }
    $\mathsf{proposal\_vector} \leftarrow \mathsf{launch\_vector}$\;
       \For{$j \leftarrow 1 \dots l^\mathsf{launch}$}{
            $\mathsf{proposal\_vector}_j \leftarrow \begin{cases}
    \boldsymbol\Gamma^r_\mathsf{a} & \text{w.p. } \propto \pi^r_{\boldsymbol\Gamma^r_\mathsf{a}} * f(Z^\mathsf{launch}_{j} | \theta^r_{\boldsymbol\Gamma^r_\mathsf{a}} ) \\
    \boldsymbol\Gamma^r_\mathsf{b} & \text{w.p. } \propto  \pi^r_{\boldsymbol\Gamma^r_\mathsf{b}} * f(Z^\mathsf{launch}_{j} | \theta^r_{\boldsymbol\Gamma^r_\mathsf{b}} ) \\
    \end{cases} $\;
    $\mathsf{\log\_prob\_forward} \pluseq \log(\pi^r_{\mathsf{proposal\_vector}_j} * f(Z^\mathsf{launch}_{j} | \theta^r_{\mathsf{proposal\_vector}_j} )) $\;
           }
  	 }(\commentSt*[f]{Merge}){
       $\mathsf{launch\_vector} \leftarrow$ calculated following the process for \texttt{Split}, setting all elements equal to $\Gamma_\mathsf{b}^r$ to $\Gamma_\mathsf{a}^r$, and relabeling components\;
    $\mathsf{proposal\_vector} \leftarrow \mathsf{launch\_vector}$\;
       \For{$j \leftarrow 1 \dots l^\mathsf{launch}$}{
            $\mathsf{proposal\_vector}_j \leftarrow \begin{cases}
    \boldsymbol\Gamma^r_\mathsf{a} & \text{w.p. } \propto \pi^r_{\boldsymbol\Gamma^r_\mathsf{a}} * f(Z^\mathsf{launch}_{j} | \theta^r_{\boldsymbol\Gamma^r_\mathsf{a}} ) \\
    \boldsymbol\Gamma^r_\mathsf{b} & \text{w.p. } \propto  \pi^r_{\boldsymbol\Gamma^r_\mathsf{b}} * f(Z^\mathsf{launch}_{j} | \theta^r_{\boldsymbol\Gamma^r_\mathsf{b}} ) \\
    \end{cases} $\;
    $\mathsf{\log\_prob\_backwards} \pluseq \log(\pi^r_{\mathsf{proposal\_vector}_j} * f(Z^\mathsf{launch}_{j} | \theta^r_{\mathsf{proposal\_vector}_j} )) $\;
           }
  	 }
Perform Metropolis-Hastings updating step on $\boldsymbol\theta, \phi$, using $\mathsf{\log\_prob\_backwards}$ and $\mathsf{\log\_prob\_forwards}$\;
 \For(\commentSt*[f]{Swap}){$j \leftarrow 1 \dots l^\mathsf{launch}$}{
            $\mathsf{proposal\_vector}_j \leftarrow \begin{cases}
    1 & \text{w.p. } \propto \pi^r_{1} * f(Z^\mathsf{launch}_{j} | \theta^r_{1} ) \\
    2 & \text{w.p. } \propto \pi^r_{2} * f(Z^\mathsf{launch}_{j} | \theta^r_{2} ) \\
    &\vdots \\
    Q & \text{w.p. } \propto \pi^r_{Q} * f(Z^\mathsf{launch}_{j} | \theta^r_{Q} ) \\
    \end{cases} $\;
           }
 \caption{Split-Merge-Swap sampling for $\boldsymbol\gamma^r$.}
\end{algorithm}

\end{appendix} 

\end{document}